\DeclareMathOperator*{\argmax}{arg\,max}
\newtheorem{assumption}{Assumption}
\newtheorem{observation}{Observation}
\newtheorem{claim}{Claim}
\begin{document}    

\title{A Mechanism for Participatory Budgeting With Funding Constraints and Project Interactions}
\author{Mohak Goyal\thanks{In alphabetical order. Emails: mohakg@stanford.edu, sahasras@stanford.edu, and ashishg@stanford.edu. 
We thank Lodewijk Gelauff and Sanath Kumar Krishnamurty for their insightful comments on this work. }  \and Sahasrajit Sarmasarkar$^{\star}$ \and Ashish Goel}
\authorrunning{Mohak Goyal, Sahasrajit Sarmasarkar, and Ashish Goel}
\institute{Stanford University}
\maketitle
\begin{abstract}

 Participatory budgeting (PB) has been widely adopted and has attracted significant research efforts; however, there is a lack of mechanisms for PB  which elicit project interactions, such as substitution and complementarity, from voters. Also, the outcomes of PB in practice are subject to various minimum/maximum funding constraints on `types' of projects. There is an insufficient understanding of how these funding constraints affect PB's strategic and computational complexities. We propose a novel preference elicitation scheme for PB which allows voters to express how their utilities from projects within `groups' interact. We consider preference aggregation done under minimum and maximum funding constraints on `types' of projects, where a project can have multiple type labels as long as this classification can be defined by a 1-laminar structure (henceforth called 1-laminar funding constraints). Overall, we extend the Knapsack voting model of Goel et al. \cite{goel2019knapsack} in two ways -- enriching the preference elicitation scheme to include project interactions and generalizing the preference aggregation scheme to include 1-laminar funding constraints. We show that the strategyproofness results of Goel et al. \cite{goel2019knapsack} for Knapsack voting continue to hold under 1-laminar funding constraints. Moreover, when the funding constraints cannot be described by a 1-laminar structure, strategyproofness does not hold. Although project interactions often break the strategyproofness, we study a special case of vote profiles where truthful voting is a Nash equilibrium under substitution project interactions. We then turn to the study of the computational complexity of preference aggregation. Social welfare maximization under project interactions is NP-hard. As a workaround for practical instances, we give a fixed parameter tractable (FPT) algorithm for social welfare maximization with respect to the maximum number of projects in a group when the overall budget is specified in a fixed number of bits. We also give an FPT algorithm with respect to the number of distinct votes.

\end{abstract}

\section{Introduction}
Participatory Budgeting (PB)  
is a process through which residents can vote on a city government's use of public funds \cite{cabannes2004participatory,wainwright2003,ganuza2012}. Residents might, for example,
vote on allocating funds between projects like
street repairs or enhancing public safety. PB has been shown to promote
citizen engagement, government transparency, and good governance \cite{wampler2007,johnson2021testing,gatto2021governance}.
Projects in PB have a fixed cost, and there is an overall budget $B$ of funds that the city can spend. 

Several voting methods have been used in PB \cite{benade2021preference}. The most widely used methods in practice are  \emph{K-Approval} \cite{brams2002voting}, in which voters approve up to $K$ projects on the ballot and 
\emph{Approval} \cite{brams1978approval}, in which voters approve any number of projects that they like. These methods are preferred for their simplicity for the voters. 
In \emph{Ranking} \cite{arrow2010handbook}, voters rank all the projects in order of value or value-for-money. 
PB organizers often want voters to understand the budgetary constraints they face. For this, a popular choice is \emph{Knapsack voting} \cite{goel2019knapsack}, in which voters select any number of projects, subject to their costs satisfying the budget constraint $B$. 

However, all these methods ignore utility interactions from different projects.
In fact, most existing work in PB (with the notable exceptions of \cite{jain2020,fairstein2021proportional,rey2020designing}) assume that the utilities of voters are additively separable over different projects. This assumption fails to capture many real-world complexities of voter preferences \cite{aziz2021participatory,rey2023computational}. 
For example, consider the following two projects proposed to enhance public safety in area A.
\begin{example} \label{eg1}
    Project 1: Install more streetlights in area A. \\
    Project 2: Hire additional police officers for area A. 
    
    Some voters may believe that either of these projects is sufficient to solve the problem, but doing both would be excessive. For these voters, the two projects are \emph{substitutes.}  
    However, some other voters may think that both projects are necessary to make area A safe -- and doing only one project would be a waste of money since, in that case, they will continue to avoid the area. For these voters, the two projects are \emph{complements.} 
    Another group of voters may think that additional police officers are not required and only streetlights are required for the area. For these voters, these two projects are \emph{independent} (and they like only Project 1.)
\end{example}

Example \ref{eg1}, with only two projects, illustrates the complexity of eliciting different voters' preferences in PB in real-world scenarios. All the methods we discussed earlier, i.e., K-Approval, Approval, Ranking, and Knapsack voting, fail to capture the different preferences of the voters in Example \ref{eg1}.

There is another natural type of project interaction, which occurs when some projects contradict each other. Consider the following example for road development.

\begin{example} \label{eg2}
    Project 1: Widen the car lane at street X.\\
    Project 2: Build a bike lane on street X. 
    
    Often due to physical constraints, at most one of these projects can be done. In such cases, these projects are \emph{contradictory}. Ideally, the PB ballot \emph{must} inform the voters of such a real-world constraint and restrict the space of their possible votes accordingly.
\end{example}

Previous works \cite{jain2020,fairstein2021proportional} have modelled project interactions in PB via various utility functions. However, their preference elicitation schemes do not enable voters to express their opinions on various project interactions. For example, Jain et al. \cite{jain2020} use Approval voting and assume that the PB organizer knows the project interactions. Fairstein et al. \cite{fairstein2021proportional} allow voters to express their own groups of projects with interacting utilities; however, they only consider substitution project interactions. This paper aims to fill this gap and design an intuitive preference elicitation scheme under which voters can express a broader range of project interactions than in the existing literature.

Another contribution of this paper is the study of the implications of funding constraints on PB's strategic and computational aspects.
From a fairness and equity point of view in budgetary tasks, imposing maximum and/or minimum funding constraints on types of projects is often desirable. Moreover, a project can have multiple `types'. For this, we consider a 1-laminar structure of type labels of projects. As we show in Observation~\ref{obs:tree}, when the funding constraints are defined on this 1-laminar structure of project labelling, these can be represented as a rooted tree -- which, in turn, implies a hierarchical ordering of these project labellings.

Such hierarchically defined funding constraints are natural to study for PB \footnote{See, e.g., \href{https://www.cbo.gov/system/files/2023-03/58890-Discretionary.pdf}{here} and \href{https://www.cbo.gov/system/files/2023-03/58889-Mandatory.pdf}{here} that the US federal discretionary and mandatory fundings are described hierarchically.}. The highest order constraint is the overall budget constraint $B$. The second-order constraint may, for example, be on the funding to different districts in the city. Further down, for each district, we may have constraints on department-level maximum and minimum funding, such as for public safety, infrastructure, and outreach. Within each department, there may be funding constraints on different sub-departments. Each sub-department may be proposing several projects on the ballot. Therefore, the 1-laminar structure of funding constraints closely captures the situations that may arise in real-world PB.
%
We study how these funding constraints affect the strategic considerations\footnote{Note that the funding constraints do not restrict the space of possible votes  -- these constraints are imposed only on the PB outcome. It is conceivable that a PB organizer may impose these constraints on the votes too. However, we believe that this is not a good idea. It makes voting complicated and restricts voters' freedom to express their opinions. For example, in a PB election in Austin, TX, USA, in 2020, certain groups expressed dissatisfaction with the limitations of the budget input tool as to how much funding they could deduct from the police department \cite{austin2020}. The city then conducted a follow-up election removing those constraints from the ballot.} of voters and the computational complexity of preference aggregation algorithms. 

\subsection{Overview of Our Proposed Model}

Same as Jain et al. \cite{jain2020}, we adopt a framework wherein the projects on the ballot are partitioned 
into \emph{groups}\footnote{Disambiguation: see that the `group' and the `type' of a project are different concepts in our model. The groups are designed on the ballot interface for the purpose of preference elicitation regarding projects with potentially interacting utilities. A project is in a single group. Whereas a project `type' is a label imposed onto it for the purpose of specifying funding constraints. A project can have multiple types.} by the PB organizer.
These groups would typically correspond to a theme -- for example, public safety, road development, food support, etc.\footnote{The process of designing the ballot and partitioning the projects into groups is very important -- the fairness and effectiveness of the PB election crucially depend on it. However, it is not the subject of study in this work. 
Jain et al. \cite{jain2021partition} propose doing a preliminary election for aggregating the project partitions, and Baumeister et al. \cite{baumeister2021complexity} study the complexity of manipulating the ballot by its designers.}
We assume that projects in different groups have non-interacting utilities for all voters; however, for projects within a group, there can be one of four possibilities regarding project interactions: the projects are \emph{1) substitutes, 2) complementary, 3) contradictory, or 4) independent (no interaction).} We formally describe the class of preferences expressible in our proposed method in Observation~\ref{obs:class_of_pref} in \S\ref{sec:utility}.
Crucially, in our model, voters need not agree on these project interactions except when the projects are contradictory (in this case, the ballot is designed to reflect the interaction).

Votes in our scheme have three parts. First, voters 
\emph{allocate funds} to different groups of projects subject to the constraint that the sum of these allocations is, at most, the total funds $B$. This part of the vote is inspired by Knapsack voting.
In the second part of the vote, they perform \emph{approval voting} within all groups to which they allocate nonzero funds. 
In the third part, they answer a \emph{complementarity question} for each group to which they allocate nonzero funds, where they give a yes/no answer to whether the projects they approve in the group are complementary \emph{for them.} 
We explain in \S\ref{sec:utility} how these three simple parts of the voting scheme come together and provide an intuitive language for the voters to express project interactions. 


We adopt a natural generalization of the overlap utility function, which was first given by Goel et al. \cite{goel2019knapsack} for Knapsack voting\footnote{Recall that Knapsack voting entails a scheme  where voters approve as many projects they like as long as the cost of their approval set is at most $B$. Such an approval set is referred to as their `preferred' or `ideal' outcome.}. The overlap utility function for Knapsack voting captures, in dollar terms, the agreement between the voter's `ideal outcome' and the PB outcome. Without project interactions, maximizing the overlap utility is equivalent to minimizing the $\ell_1$ distance between the chosen outcome and the votes. Our generalization of the overlap utility (Definition~\ref{def:utility}) captures project interactions; that is, for a bundle of complementary projects, a voter receives utility only if the entire bundle is funded, and for a bundle of substitute projects, the voter's utility saturates at a point that they specify via their vote. 

We take a utilitarian approach and consider preference aggregation for social welfare (SW) maximization, where SW is the sum of the utilities of all voters.
 As is crucial for budgeting tasks, we also incorporate 1-laminar funding constraints to the preference aggregation scheme. We call our mechanism ``Participatory budgeting with project interactions'' (PBPI).

\subsection{Our Results}

We study the incentives of strategic voting in PBPI (\S\ref{sec:IC}). For singleton groups, PBPI is same as Knapsack voting. We show that the strategyproofness\footnote{A voting mechanism is strategyproof \cite{fudenberg1991game} if it is a weakly-dominant strategy for all voters to vote truthfully.} result of Goel et al.\footnote{Goel et al. \cite{goel2019knapsack} showed that Knapsack voting is strategyproof for unit-cost projects under the overlap utility function. Further, their model does not consider project interactions, equivalent to PBPI with singleton groups.} \cite{goel2019knapsack} continues to hold under 1-laminar funding constraints. This result has an important implication: the 1-laminar funding constraints do not make the voting mechanism more complex for a voter. These constraints may, for example, be shared in a separate document from the actual ballot, and the voter may or may not review it to make an `informed' decision. Moreover, this result is tight in the sense that if the the funding constraints do no have a 1-laminar structure, then the mechanism is not strategyproof.

With project interactions, PBPI is often not strategyproof; we study an interesting special case of vote profiles for which truthful voting is a Nash equilibrium \cite{nash1950equilibrium} under substitution project interactions. The arguments used in the proof are subtle and require the construction of a 2-layer algorithm and a careful analysis of voter strategies and potential benefits from various possible strategic deviations.

We then study the computational complexity of preference aggregation in PBPI. Due to the complex project interactions captured in our model, identifying a social welfare maximizing set of projects under budget constraints is NP-hard.
We show that when the project groups have a fixed maximum size, and the budget is specified in a fixed number of bits, the problem is fixed-parameter tractable (FPT). For this, we provide a recursive algorithm (\S\ref{sec:complexity}). Notably, this result holds also with 1-laminar funding constraints. This result implies that for most real-world instances of PB (where not too many projects are expected to have interacting utilities together), computational complexity is not a worry for PB designers, even if the voter turnout is large. We also study the case where the number of distinct votes is small, and give an FPT algorithm for it. While not applicable to general PB elections where all voters submit independent votes, this result is important for cases where a small number of elected representatives engage in a budgetary tasks. Each voter, in this case, may also have a `weight' denoting how many people they represent.

%
\subsection{Related Work}


\textbf{On Project Interactions in PB. }
Jain et al. \cite{jain2020} was the first work to propose a model of PB where projects are divided into groups, and only the projects within a group can have interacting utilities. 
 Our model PBPI differs from theirs in the following ways:  
\begin{itemize} [leftmargin = 0.35cm]
        \item They use approval voting. 
     This choice of the preference elicitation method implies that the PB organizers need to assume the knowledge of project interactions. This further implies that in the eyes of the mechanism, all voters have the same project interactions. Our preference elicitation scheme allows different voters to have different project interactions, which they can express in the vote.  Overall, in the trade-off between expressiveness and simplicity, we adopt a more expressive method, whereas theirs is simpler.
    
    \item Their utility function does not consider project costs -- it depends only on the number of projects funded from a voter's approval set. Our generalization of the overlap utility function accounts for project costs. This is important when projects have vastly different scales, costs, and utilities. 

    \item PBPI models contradictory projects explicitly, unlike their model.
  \item In addition to an overall budget constraint (as considered in Jain et al. \cite{jain2020}), we also consider 1-laminar funding constraints on the PB outcome.
\end{itemize}

Fairstein et al. \cite{fairstein2021proportional} take an egalitarian approach and give a mechanism which gives a \emph{proportional} \cite{peters2020proportionality} outcome while accounting for substitute projects, where voters can express which projects are substitutes.

Rey et al. \cite{rey2020designing} consider perfect complements and perfect substitute project interactions in PB via the solution concept of judgement aggregation \cite{list2009judgement}. 
However, they also have approval votes, and therefore the project interactions need to be assumed to be known to the PB organizers. 

For multi-winner elections (MW) (MW is the same as PB if all projects have equal costs), Izsak \cite{izsak2017working} studies a model with complementarity effects or ``synergies'' and propose a preference elicitation scheme based on the submodular degree (a concept coined by Feige and Izsak \cite{feige2013welfare} to capture the extent to which a function exhibits supermodular behaviour).  
Izsak et al. \cite{izsak2018committee} extend the model to categorize candidates into classes and consider interclass and intraclass synergies; they do not propose a preference elicitation scheme.

\textbf{On Strategic Voting. }
Goel et al. \cite{goel2019knapsack} study the Knapsack voting model with overlap utilities and show that their mechanism is strategyproof if one of the two assumptions holds: 1) partial funding of projects is allowed, 2) projects have unit cost. We adopt a generalization of their overlap utility function, adapted for capturing project interactions (Definition~\ref{def:utility}) and extend their model for capturing project interactions and studying the computational and strategic implications of funding constraints on the outcome. 

Freeman et al. \cite{freeman2021truthful} study a class of ``moving-phantom'' strategyproof mechanisms for PB. They show that the social welfare-maximizing mechanism is the unique Pareto-optimal one in this class of mechanisms.

Yang and Wang \cite{yang2018multiwinner} study strategyproofness in MW elections 
for various types of restrictions on the  outcome, represented by a graph of the alternatives. For example, they consider cases where the outcome has to be a connected subgraph, an independent set, or a subgraph of bounded diameter. Their results do not cover the class of constraints we study, 1-laminar funding constraints.   




\textbf{On Funding Constraints in PB. }
Patel et al. \cite{patel2021group} study a more general model than PB called \emph{group fair knapsack}. They consider minimum and maximum constraints on the total weight of items from a type in the outcome and provide approximation algorithms for several problem variants.



Jain et al. \cite{jain2020participatory} consider a model where each group of projects has a budget limit in addition to the overall PB budget limit. While  social welfare maximization is NP-hard in their setup, they give efficient
algorithms for several special cases.  
Chen et al. \cite{chen2022participatory} consider funding constraints in a setup where the overall budget depends on donations from citizens.

 Constraints on the outcome similar in spirit to our funding constraints have also been studied in MW elections \cite{bredereck2018multiwinner,celis2017multiwinner,yang2018multiwinner,bei2022candidate}. Bredereck et al. \cite{bredereck2018multiwinner} study the computational aspects of identifying an outcome under various scoring rules and diversity constraints defined over a 1-laminar classification of candidates. Bei et al. \cite{bei2022candidate} study the fairness-motivated constraints on types of candidates in a setting where the size of the outcome is not predetermined and can be set to satisfy all fairness-motivated constraints.

Our model of funding constraints (on a 1-laminar classification) is most similar to that of Brederek et al. \cite{bredereck2018multiwinner}. Their focus is on the computational aspects of the model; in addition to that, we also study the strategic aspects of voting. Also, our computational results hold for PB and not just MW elections. 

\begin{figure*}[tb]
\centering
    \includegraphics[scale = 0.41]{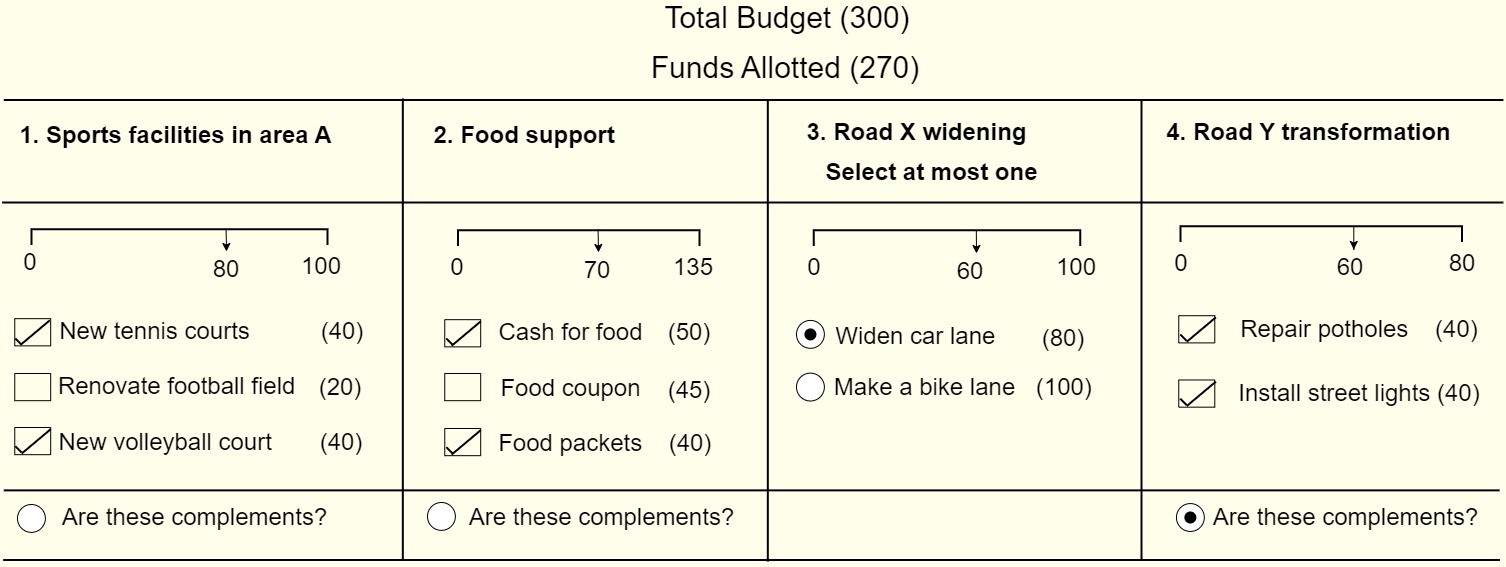}
    \caption{A simple example of a PBPI ballot with a partial vote marked. There are four groups of projects. Groups 1, 2, and 4 have non-contradictory projects. Group 3 is of contradictory projects; a voter can approve at most one project from this group. The total budget is $B = 300.$ The costs of projects are given in parentheses next to its name. The fund allocation $f_g^i$ to each group is represented by a slider. 
    The total funds allocated by the voter are represented by `Funds Allotted' and are constrained to be at most $B.$ 
    Finally, there is a complementarity question, via which the voter can indicate if their approved projects in the group are complementary for them.
    } 
    \label{fig:ballot}
\end{figure*}
\section{Model}
\label{sec:PBPI}
In this section, we describe the design of the ballot, our novel preference elicitation scheme, the utility function, the preference aggregation scheme, and conclude the section with a discussion of the merits and limitations of the model. Figure~\ref{fig:ballot} is an example of a simple ballot in our mechanism. 

\subsection{Ballot Design and Preference Elicitation Scheme} \label{sec:ballot}
There are $n$ voters $\{1,2,\ldots,n\}$ denoted by $[n]$ and $m$ projects given by $P = \{p_1, p_2, \ldots, p_m\}.$ Project $p_j$ has a fixed positive rational cost $c_j$. $ B$ denotes the total budget of funds.
 An \emph{outcome} of an instance of PB is a bundle of projects $Q$ that satisfies the budget constraint $ \sum_{j \in Q} c_j \leq B.$
 The PB organizer partitions the set of projects $P$ into $r$ \emph{groups} --  this partition is $Z = \{z_1, z_2, \ldots, z_r\}$ such that $\cup_{j \in [r]} z_j = P$ and $z_j \cap z_k = \emptyset$ for all $j \neq k.$ Groups of projects on the ballot can be of one of two forms: \emph{contradictory} and \emph{non-contradictory.}

In a group of contradictory projects, there is a real-world constraint that at most one of the projects can be implemented\footnote{We can generalize this constraint  such that at most $k$ of these projects can be implemented for any $k > 0.$ $k$ can be different for different groups of projects. All our results will continue to hold. We use $k=1$ for clarity of exposition.} (recall Example \ref{eg2} on road widening). PBPI imposes this constraint on all voters.

%
%

Voter $i'$s vote $v^i$ has three components: $(f^i, s^i, t^i).$
\begin{itemize}
    \item The \emph{fund allocation} for group $g,$   $f^i_g \geq 0$ is the amount of funds that voter $i$ allocates to $g$. $f^i$ must satisfy the budget constraint $\sum_{g \in [r]} f^i_g \leq B.$
    \item The \emph{approval set} in group $g,$ $s_g^i,$ is the subset of $z_g$ that voter $i$ approve.
\item Finally, $t^i_g$ is the binary answer to the \emph{complementarity question} in group $g$, such that $t^i_g = 1$ if voter $i$ considers $s_g^i$ to be complementary, and $t^i_g = 0$ if voter $i$ considers $s_g^i$ to be substitutes or independent.
\end{itemize}
 The complementarity question is not asked for groups of contradictory projects (see, for example, Group 3 in Figure~\ref{fig:ballot}) where at most one project can be done -- here $t^i_g$ is fixed to $0$ for notational convenience.
We refer to the set of all votes $\{v^i | i \in [n]\}$ as the \emph{vote profile} $V.$ We further assume that   the project costs $c_j,$ total budget $B,$ and fund allocations $f_g^i$  $\forall i \in [n], g \in [r]$ are positive integers. This is reasonable since the parameters $C$ and $B$ can be rescaled to be integers as long as these are rational without changing the outcome of PB. Typically in real-world PB, costs and budgets are not specified at high precision.
%
\subsection{Utility Function and the Class of Expressible Preferences} \label{sec:utility}
We adopt a generalization of the overlap utility \cite{goel2019knapsack}. In this subsection, we describe how our preference elicitation method enables voters to express their project interactions and also define the class of expressible preferences in PBPI.
\begin{definition}[Utility] \label{def:utility}
    The utility of voter $i$ with vote $v^i = (f^i, s^i, t^i)$ from a bundle of projects $Q$ is:
    \begin{equation*}
 u_i(Q) = \sum_{g=1}^r 
 \left(t_g^i \cdot \mathbb{I}(s_g^i \subseteq Q) \cdot \min (f_g^i, \sum_{j \in \{Q \cap s_g^i\}} c_j )  + (1- t_g^i)\cdot\min (f_g^i, \sum_{j \in \{Q \cap s_g^i\}} c_j )\right)  \label{eq:utility}
\end{equation*}
Here $\mathbb{I} (\cdot)$ denotes the indicator function.
\end{definition}

\begin{definition}[Social Welfare] \label{def:social-utility}
The sum of utilities of all $n$ voters is the social welfare $U(Q)$, i.e., $U(Q) = \sum_{i \in [n]} u_i(Q).$ We use the terms social welfare and social utility interchangeably.
\end{definition}

A voter's utility from group $g$ cannot exceed $\sum_{j \in  s_g^i} c_j$. Therefore, we may restrict the amount of funds they allocate to a group, $f_g^i,$ at be most $\sum_{j \in s_g^i} c_j.$ However, such a condition is not required for the technical analysis, and we omit it. All our results continue to hold if we impose such a constraint on $f_g^i.$

The first term in the utility function corresponds to the groups of \emph{complementary projects} for the voter (i.e., $t^i_g=1$). In this case, the voter receives utility equal to $\min (f_g^i, ~\sum_{j \in \{Q \cap s_g^i\}} c_j )$ if all the projects they approve in this group are funded. Otherwise, they get zero utility from the group.  Group 4 in Figure~\ref{fig:ballot} is an example of this case for the marked vote. The voter believes that repairing the potholes and installing the streetlights are both required to make road Y useable -- doing any less will be a waste of money.


The second term captures the utility structure of \emph{substitution with satiation.} This function is inspired by the Leontief-free utility function of Garg et al. \cite{garg2018substitution}. 
In our model, interestingly, this term can capture the other three cases of project interactions - contradictory, substitute, and  independent groups of projects. 
 
 In \emph{contradictory groups}, voters are allowed to approve at most one project. If voter $i$ approves project $j^* \in z_g$, their utility from group $g$ is $\min(f^i_g, c_{j^*})$ if $j^* \in Q$  and $0$ otherwise. See, e.g., Group 3 in Figure~\ref{fig:ballot}. 
 
 For non-contradictory groups, if a voter believes that the projects they approve are \emph{substitutes}, they allocate as many funds $f^i_g$ at which their utility is capped (i.e., their point of satiation). Their utility from the projects in $s_g^i$ adds up linearly up to the point where it saturates. Overall, it is $\min (f_g^i, \sum_{j \in \{Q \cap s_g^i\}} c_j ).$ 
 An example is Group 2 in Figure~\ref{fig:ballot}. The voter in the example thinks that distributing food packets and giving out cash for food are both useful projects, and the utility they derive from these projects is up to $70$. Note that this is less than the sum of the costs of the projects they approve in the group.
 
 However, if the projects are \emph{independent} for them, voters can express this by allocating funds $f_g^i$ that cover the cost of all the projects they approve in the group. 
 An example is Group 1 in Figure~\ref{fig:ballot} for the marked vote. Here, the voter thinks that new tennis courts and a new volleyball court are both useful, and the utilities are independent. They allocate funds $f^i_g$ equal to the costs of the two projects so that their utility function from this group is not saturated at $f^i_g$.

 See that PBPI is a generalization of the knapsack voting framework of Goel et al. \cite{goel2019knapsack}. If all projects are \emph{independent,} then a voter can allocate $f^i_g$ equal to the sum of costs of the projects they approve in the group and set $t_g^i = 0$. PBPI can, in fact, elicit a much richer class of preferences from individual voters, as we describe in the following.

 \textbf{Class of Expressible Preferences}

The most expressive PB scheme is one where voters report their utility for each possible subset of projects. This requires $2^m$ entries from each voter and is infeasible except for very small ballots.
On the other hand, Approval and K-Approval voting methods focus on simplicity and disregard the projects' costs and project interactions. 
Our mechanism PBPI takes the middle ground between the two extremes in the trade-off between expressiveness and simplicity.
 We summarize the class of preferences expressible in PBPI below.
 \begin{observation}\label{obs:class_of_pref}
 For a group of contradictory projects, a voter can express their:
\begin{itemize}
\item \textbf{A. Top choice}, and the utility they get from it, which can be up to its cost. 
\end{itemize}
For non-contradictory groups, a voter can express the following preferences:
    \begin{itemize}
    \item \textbf{B1. Perfect complements:} The voter views the set of projects $s^i_g$ as one unit, and their utility from this unit is $f_g^i$ (or the cost of the unit, whichever is lower) if this unit is implemented entirely and $0$ otherwise.
    \item \textbf{B2. Perfect substitutes with satiation:}  The utility from each project is equal to its respective cost, but their total utility from the group saturates at $f_g^i$. 
    \item \textbf{B3. Independent:} The voter views all projects in $s_g^i$ as independent, and the utility from each equals its cost (as per the standard Knapsack model \cite{goel2019knapsack}).
    \end{itemize}

    The cases B2 and B3 are differentiated in the vote by the satiation level expressed in $f_g^i.$ For $t_g^i = 0$, when $f_g^i$ is at least equal to the sum of the costs of all the projects approved by the voter, the group is of independent projects for them, and substitute projects otherwise.
 \end{observation}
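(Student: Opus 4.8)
The plan is to establish each listed preference type directly from the per-group contribution in Definition~\ref{def:utility}, since the observation is exactly a catalogue of the payoffs that this functional form realizes as the reports $t_g^i$, $f_g^i$, and $s_g^i$ are varied. Because the groups partition $P$ and the utility is additive across them, it suffices to reason about a single group $g$ in isolation, whose contribution to $u_i(Q)$ is
\[
t_g^i \cdot \mathbb{I}(s_g^i \subseteq Q) \cdot \min\!\Big(f_g^i, \sum_{j \in Q \cap s_g^i} c_j\Big) + (1 - t_g^i) \cdot \min\!\Big(f_g^i, \sum_{j \in Q \cap s_g^i} c_j\Big).
\]
I would then verify the four cases by substituting the appropriate value of $t_g^i$ and simplifying the $\min(\cdot,\cdot)$.

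For Case A I use that a contradictory group has $|s_g^i| \le 1$ with $t_g^i$ fixed to $0$, so the summand is $\min(f_g^i, \sum_{j \in Q \cap s_g^i} c_j)$; writing $s_g^i = \{j^*\}$, this equals $\min(f_g^i, c_{j^*})$ when $j^* \in Q$ and $0$ otherwise, and taking $f_g^i \ge c_{j^*}$ yields the full cost, so the elicited top choice earns utility up to its cost. For Case B1 I set $t_g^i = 1$: the indicator zeroes the summand unless $s_g^i \subseteq Q$, in which case $Q \cap s_g^i = s_g^i$ and the term becomes $\min(f_g^i, \sum_{j \in s_g^i} c_j)$, the perfect-complements payoff. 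For Cases B2 and B3 I set $t_g^i = 0$, leaving $\min(f_g^i, \sum_{j \in Q \cap s_g^i} c_j)$, and the distinction turns entirely on $f_g^i$: when $f_g^i \ge \sum_{j \in s_g^i} c_j$ the cap never binds and the term equals $\sum_{j \in Q \cap s_g^i} c_j$, i.e.\ additive independent utility (Case B3); when $f_g^i < \sum_{j \in s_g^i} c_j$ the payoff accumulates each funded project's cost until it saturates at $f_g^i$ (perfect substitutes with satiation, Case B2).

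Since each case is a one-line substitution into the $\min$ expression, I do not expect a genuine obstacle. The only point meriting care is the B2/B3 boundary asserted in the final sentence of the observation: I would confirm that $f_g^i = \sum_{j \in s_g^i} c_j$ is precisely the threshold at which the saturation cap ceases to bind for every feasible $Q$ (since $\sum_{j \in Q \cap s_g^i} c_j \le \sum_{j \in s_g^i} c_j$ always), so the independent and substitute regimes are cleanly separated by the reported satiation level. This completes the characterization.
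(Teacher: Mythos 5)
Your case analysis is correct and matches the paper's own justification: the paper does not give a formal proof of this observation but instead establishes it through the informal discussion in \S\ref{sec:utility}, which walks through exactly the same substitutions of $t_g^i$, $f_g^i$, and $s_g^i$ into the per-group term of Definition~\ref{def:utility} (contradictory groups via the second term with $|s_g^i|\le 1$, complements via the indicator in the first term, and the B2/B3 split via whether the satiation cap $f_g^i$ binds). Your added verification that $f_g^i = \sum_{j \in s_g^i} c_j$ is precisely the threshold at which the cap ceases to bind for every feasible $Q$ is a correct and slightly more explicit statement of the boundary the paper asserts in the last sentence of the observation.
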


\subsection{Preference Aggregation and Funding Constraints} \label{sec:gf}
A line of work on PB \cite{jain2020participatory,patel2021group} has studied the problem of imposing minimum and/or maximum \emph{funding constraints}  on types of projects to ensure the diversity of the outcome. This is sometimes done with the idea that a particular type of projects may be more beneficial to certain demographics in society. Since fairness-motivated interventions based on demographics are often hard to implement in PB \cite{gelauff2020advertising} or are disallowed by law \cite{sowell2004affirmative}, imposing funding constraints on types of projects is a reasonable proxy to obtain a diverse and equitable outcome. 

Let $L$ be a set of labels. Denote the set of projects with the label $l \in L$ by $P_l.$ We refer to the set of projects $P_l$ as `type' $l$. Each project can have any number of labels from set $L$. These labels could, for example, indicate the project's geographic location or theme (e.g. infrastructure, education, etc.). However, this general system of project labels is difficult to study formally for computational complexity and strategic voting. Therefore we make the following assumptions. 


\begin{assumption}[1-laminar labelling] \label{ass:1lam}
The labelling is 1-laminar\footnote{Bredereck et al. \cite{bredereck2018multiwinner} considered the same labelling structure  while studying computational complexity in MW.}. That is, for any two distinct labels $x$ and $y$, the sets $P_x$ and $P_y$ satisfy either a) $P_x \cap P_y = \emptyset$ or  b) $P_x \subset P_y$ or c) $P_y \subset P_x$.
\end{assumption}

\begin{observation} \label{obs:tree}
    Any 1-laminar labelling can be represented as a rooted tree, denoted by $\mathcal{T}_L.$ There is a node corresponding to each label. The root node corresponds to a default label (which we construct for notational convenience) that applies to all projects. Nodes of all labels $y$ such that $P_y \subset P_x$ are children of the node representing label $x$ if there is no $z$ satisfying  $P_y \subset P_z \subset P_x.$ 
\end{observation}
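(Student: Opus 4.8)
The plan is to prove Observation~\ref{obs:tree} by giving an explicit construction of the tree $\mathcal{T}_L$ from the 1-laminar labelling and then verifying that this construction is well-defined (every non-root node has exactly one parent) and yields a connected rooted tree. First I would augment the label set $L$ with a default root label whose associated project set is all of $P$; by Assumption~\ref{ass:1lam} this root set contains every $P_x$, so adding it preserves the 1-laminar property. The key object is the containment relation: for distinct labels $x,y$, the 1-laminar condition forbids case (a) from coinciding with proper overlap, so any two type-sets are either disjoint or nested. I would then define the parent of a node $x$ (other than the root) to be the label $y$ with $P_x \subset P_y$ minimizing $P_y$ under set inclusion, i.e. the label $y$ such that $P_x \subset P_y$ and no $z$ satisfies $P_x \subset P_z \subset P_y$; this matches the children description in the statement.

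The main content is to show this parent is \emph{unique}, which is where I expect the only real subtlety. Suppose $x$ had two candidate immediate ancestors $y$ and $y'$, both satisfying $P_x \subset P_y$, $P_x \subset P_{y'}$ with no label strictly between. Since $P_y$ and $P_{y'}$ both contain $P_x$, they intersect (they share $P_x$), so by 1-laminarity they cannot be disjoint and must be nested: without loss of generality $P_{y'} \subseteq P_y$. If the containment were strict, then $P_x \subset P_{y'} \subset P_y$ would exhibit a label strictly between $P_x$ and $P_y$, contradicting that $y$ is an immediate ancestor; hence $P_{y'} = P_y$. After identifying labels that share the same project set (or assuming distinct labels have distinct type-sets, which I would state as a harmless normalization), this forces $y = y'$, giving uniqueness. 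One should also check existence of a parent: the root is always an ancestor since $P_x \subset P$ for every non-root $x$, so the collection of ancestors is nonempty and finite, and picking a minimal one under inclusion is possible.

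Finally I would argue the resulting structure is genuinely a rooted tree. Every node other than the root has a uniquely defined parent, and following parents strictly increases the associated type-set under inclusion; since $L$ (augmented) is finite and inclusion is a strict partial order along any parent chain, the chain terminates, and it can only terminate at the root (the unique label with no proper superset). This shows the parent function induces no cycles and that every node is connected to the root, which is exactly the definition of a rooted tree. The edges connect each $x$ to its parent, and the child characterization in the statement follows immediately by reversing the parent definition. The hardest step is the uniqueness-of-parent argument above, since it is the only place the 1-laminar hypothesis does essential work; everything else is bookkeeping about finite strict partial orders.
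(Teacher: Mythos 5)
Your construction matches the one the paper builds directly into the statement of the observation (which the paper leaves unproved), and your verification is correct: the uniqueness-of-parent argument via nestedness of two ancestors that both contain $P_x$ is exactly where 1-laminarity does the work, and the remaining acyclicity/connectivity argument is standard finite strict-inclusion bookkeeping. Note only that under Assumption~\ref{ass:1lam}, with $\subset$ read as proper containment, two distinct labels cannot share the same nonempty project set, so the normalization you flag is already implied by the hypothesis rather than an extra assumption.
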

\begin{figure}[tb]
    \centering
\includegraphics[scale = 0.35]{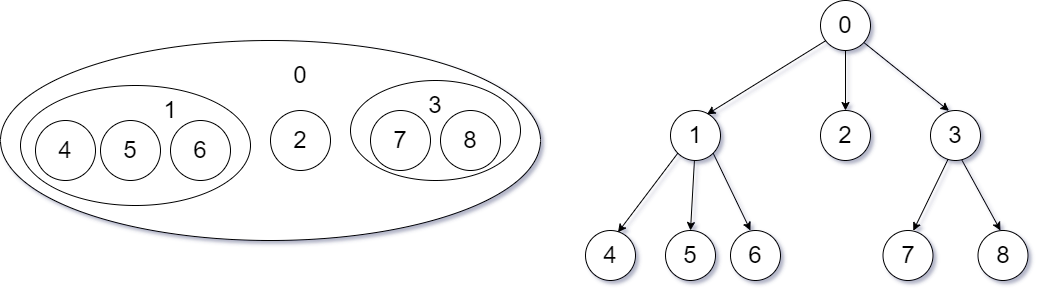}
    \caption{Each node corresponds to a label or `type' of projects. The set of projects with label 1 is the union of the set of projects with labels 4, 5, and 6.  All projects have the default label 0. The 1-laminar structure of the labels enables the representation of the label relations as a rooted tree as described in Observation~\ref{obs:tree}.}
    \label{fig:laminar}
\end{figure}

\begin{assumption} \label{ass:group_label}
    All projects in a group on the ballot have the same labels.
\end{assumption}
This assumption implies that we can introduce a new level at the bottom of the rooted tree of the labels wherein  the `group' represents a new type label and forms the leaves of the tree. 

\begin{definition}[1-laminar funding constraints] \label{def:gf}
     A valid outcome  $Q$  of the PB election must satisfy the following constraints:
     \begin{align}
     B^{\min}_l \leq \sum_{j \in Q \cap P_l} c_j \leq B^{\max}_l 
 \ \ \ \ \ \ \ \ \ \ \ \text{for all~} l \in L. \label{eq:gf}
\end{align}
\end{definition}



Note that the overall budget constraint can be seen as being associated with the default label, which applies to all projects on the ballot and corresponds to the root of the tree representation of the 1-laminar labelling.
Later in the paper, we will study if including the 1-laminar funding constraints imposes additional challenges for the computational complexity of preference aggregation or have consequences regarding strategic voting. We have positive results on both fronts - Theorems~\ref{thm:nash_equilibrium},~\ref{thm:fpt}, and~\ref{thm:fpt-n}.


We consider mechanisms for PB that produce \emph{non-fractional} and \emph{deterministic} outcomes. 
Further, we take a utilitarian approach and consider PB outcomes that \emph{maximize the social welfare (SW)}. 
In the special case where 1-laminar funding constraints are not imposed on the outcome, we denote the problem of identifying an SW-maximizing bundle of projects under budget constraints by SWM-PB;
with 1-laminar funding constraints, we call this problem FC-SWM-PB.

\subsection{Limitations of the Model} \label{sec:discussion}
%




Same as the previous works in this line \cite{jain2020,fairstein2021proportional}, PBPI does not capture the case where both substitutes and complementary projects are present in a group. For example, in a group of projects $\{p_1,p_2,p_3\}$, it is possible that a voter finds $p_1$ and $p_2$ to be complementary, but $p_3$ to be a substitute for the bundle $\{p_1,p_2\}.$ PBPI does not enable voters to express such project interactions. 

PBPI also does not model \emph{ranked preferences over substitute projects}, which could have been elicited via a ranking-based voting scheme within groups instead of approval in PBPI.

Further, our utility function does not model \emph{soft complements}, i.e., per our utility function, a voter cannot get partial utility if only a part of their approval set in a group of complementary projects is funded. 

For a group $\{p_1,p_2\},$ voters in PBPI cannot express preferences of the following type: ``$p_1$ and $p_2$ are independent for me, but my utility from each is only half as much as their respective costs.'' This is because our utility function cannot distinguish between this case and the case where $p_1$ and $p_2$ are substitutes for the voter. A voter could express this type of preference if $p_1$ and $p_2$ were in singleton groups on the ballot.


We can mathematically model the above-stated and even more complex project interactions with cost-based utility functions. Our positives result on computational complexity in Theorems~\ref{thm:fpt} and~\ref{thm:fpt-n} (FPT with respect to the maximum group size) will hold for any utility function as long as all project interactions are within their own groups. However, we do not adopt a more complex utility function since it would require an equally expressive preference elicitation method which may not be intuitive for the voters in the real world.

Having discussed the model of PBPI and its merits and limitations, we now move on to investigating the incentives for strategic voting in PBPI.

\section{Incentives for Strategic Voting in PBPI} \label{sec:IC}

Goel et al. \cite{goel2019knapsack} showed that Knapsack voting is strategyproof under the overlap utility model if one of the two assumptions holds: 1) partial funding of projects is allowed, 2) projects have unit cost. 
Technically, these assumptions provide the same leverage, ensuring that a greedy algorithm is optimal for preference aggregation. In this section\footnote{We do not need this assumption in the section on computational complexity
.}, we work with the assumption that projects are unit-cost\footnote{It is easy to show that under overlap utility, the unit-cost assumption is crucial for strategyproofness to hold in Knapsack voting and PBPI. We sketch an example here. Consider a case with four projects whose costs are $c_1 = 1$, $c_2 = 2$, $c_3 = 2$, and $c_4 = 3$. Budget $B$ is $4$ units. There are no project interactions. A voter who only likes project $4$ is incentivised to vote for project $1$ too when all projects have equal approvals otherwise.}.

\begin{assumption}[Unit-Cost Projects] \label{ass:unitcost}
     All projects have cost $c_j =1$  $\forall j \in [m].$
\end{assumption}

Therefore, our results in this section are for the MW election setting. In our first result, we consider the special case of PBPI without project interaction, i.e., the same as the Knapsack voting setup. In this case, we show that even with 1-laminar funding constraints on the outcome, Knapsack voting is strategyproof, i.e., voters are incentivised to vote truthfully, disregarding the 1-laminar funding constraints\footnote{Recall that the votes need not satisfy the 1-laminar funding constraints, but only the overall budget constraint.}.

\begin{theorem} \label{propostion:strategyproof}
With singleton groups, unit-cost projects (Assumption~\ref{ass:unitcost}), and the 1-laminar labelling (Assumption~\ref{ass:1lam}) for the funding constraints (Definition~\ref{def:gf}), PBPI is strategyproof.
\end{theorem}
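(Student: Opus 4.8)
The plan is to first reduce the statement to a clean combinatorial optimization problem. Under singleton groups and unit costs (Assumption~\ref{ass:unitcost}), the two branches of the utility in Definition~\ref{def:utility} combine so that a group $g=\{p\}$ contributes exactly $\mathbb{I}(p\in Q)$ whenever the voter allocates $f^i_g\ge 1$ and approves $p$, and $0$ otherwise; in particular $t^i_g$ is irrelevant. Hence the only thing a report encodes is a set $A_i\subseteq P$ of \emph{effectively approved} projects (those approved with positive allocation), with true utility $u_i(Q)=|A_i\cap Q|$. Writing $n_p$ for the number of voters approving $p$, social welfare becomes $U(Q)=\sum_{p\in Q} n_p$, so FC-SWM-PB is exactly the problem of choosing a maximum-weight set $Q$ (weights $n_p$) subject to the laminar cardinality constraints $B^{\min}_l\le |Q\cap P_l|\le B^{\max}_l$ of Definition~\ref{def:gf}, with the overall budget as the root constraint of the tree $\mathcal{T}_L$ (Observation~\ref{obs:tree}). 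Let $\mathcal{F}$ denote this instance-fixed, vote-independent family of feasible sets, and $Q^*(w)$ the maximizer selected under a fixed lexicographic tie-break.

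Next I would set up a path argument. Fixing all other votes, I want truthful $A_i$ to weakly dominate any report $A_i'$. I transform $A_i'$ into $A_i$ by first deleting the \emph{false} approvals $D^-=A_i'\setminus A_i$ one at a time, then inserting the \emph{missing true} approvals $D^+=A_i\setminus A_i'$ one at a time; every intermediate report is a legal vote, since its size stays at most $\max(|A_i|,|A_i'|)\le B$. It then suffices to prove two monotonicity facts about the selected outcome: (a) raising $n_p$ by $1$ for a project $p\in A_i$ never decreases $|A_i\cap Q^*|$, and (b) lowering $n_p$ by $1$ for a project $p\notin A_i$ never decreases $|A_i\cap Q^*|$. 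Plain optimality inequalities do not suffice here, as they bound only the weighted value of the outcome and not its overlap with $A_i$, so the structure of $\mathcal{F}$ must be exploited.

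I would reduce both facts to a single-element exchange lemma: if $Q$ maximizes weight $w$ over $\mathcal{F}$ with $p\notin Q$, and some maximizer of $w^+=w+e_p$ contains $p$, then there is a $w^+$-maximizer of the form $Q\cup\{p\}\setminus\{q\}$ for a single $q\in Q$. Granting this, fact (a) is immediate: the changed outcome swaps in $p\in A_i$ and out one $q$, so the overlap changes by $1-\mathbb{I}(q\in A_i)\ge 0$; fact (b) follows from the mirror swap, where the overlap changes by $\mathbb{I}(q'\in A_i)-\mathbb{I}(p\in A_i)=\mathbb{I}(q'\in A_i)\ge 0$ since $p\notin A_i$. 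To prove the lemma I would locate the minimal tight upper-bound set $P_{l_1}$ containing $p$; since $Q^+$ already contains $p$ and respects $B^{\max}_{l_1}$, a counting argument forces some $q\in (Q\cap P_{l_1})\setminus Q^+$, and removing it restores room for $p$. Optimality of the swapped set then follows from a \emph{simultaneous} exchange, namely $Q\cup\{p\}\setminus\{q\}\in\mathcal{F}$ and $Q^+\setminus\{p\}\cup\{q\}\in\mathcal{F}$, combined with the $w$-optimality of $Q$ and the $w^+$-optimality of $Q^+$.

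The main obstacle is exactly this simultaneous exchange in the presence of \emph{both} minimum and maximum constraints: deleting $q$ could violate a tight \emph{lower} bound on a set containing $q$ but not $p$, and adding $q$ to $Q^+$ could violate an upper bound. I would resolve this using the laminar tree: any set containing $q$ but not $p$ must lie strictly below $P_{l_1}$, so $q$ can be chosen within that subtree to respect all tight lower bounds, and laminarity guarantees that the two edits propagate consistently along the chain of ancestors. (Abstractly, $\mathcal{F}$ is the set of integer points of a generalized polymatroid, whose bases are known to satisfy such an exchange property, but I would give the self-contained tree argument.) Finally I would record that the claim is tight: when the labelling is \emph{not} $1$-laminar this exchange fails, which matches the paper's accompanying observation that strategyproofness then breaks.
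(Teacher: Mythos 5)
Your reduction to max-weight selection under laminar cardinality constraints is correct, and your overall route is genuinely different from the paper's (which derives this theorem from Theorem~\ref{thm:nash_equilibrium} by noting that every singleton-group vote automatically satisfies Definition~\ref{def:vote_profiles}, and which in turn rests on an analysis of the Greedy Algorithm~\ref{alg:greedy-gf}). However, there is a genuine gap at the step where you pass from the exchange lemma to facts (a) and (b). The exchange lemma only asserts that \emph{some} $w^+$-maximizer has the form $Q\cup\{p\}\setminus\{q\}$; it says nothing about the maximizer that the mechanism actually \emph{selects} under its tie-break, and the voter's utility is determined by the selected set. When you write ``Granting this, fact (a) is immediate: the changed outcome swaps in $p$ and out one $q$,'' you are silently identifying the newly selected outcome with the maximizer produced by the lemma, which does not follow: the weight change can enlarge or reshuffle the set of tied maximizers, and a fixed order on sets may then jump to a tied optimum far from $Q$.

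This is not a cosmetic issue, because facts (a) and (b) are actually \emph{false} for some deterministic selection rules, so no argument that ignores the tie-break can close the gap. Take $B=3$, six unit-cost projects $p,a,b,x,y,z$ in singleton groups, no extra funding constraints, $A_i=\{p,a,b\}$, and one outside approval each for $x$, $y$, $z$. Under the report $\{a,b\}$ the optima are the $3$-subsets of $\{a,b,x,y,z\}$ (value $3$), while under the truthful report all $3$-subsets are optimal (value $3$); a deterministic rule that selects $\{a,b,x\}$ in the first case and $\{x,y,z\}$ (or $\{p,y,z\}$) in the second makes truth-telling strictly worse for voter $i$. Hence the theorem holds only for selection rules with a monotonicity property of exactly the kind your facts (a)/(b) assert, and proving that property is the real content -- it is what the paper obtains by running the greedy algorithm with a fixed project-level priority and tracking how its output changes (Lemma~\ref{lem:greedy} and the case analysis in Theorem~\ref{thm:nash_equilibrium}). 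To repair your proof you would need to fix a concrete tie-break (e.g.\ greedy by weight then project priority, or the lexicographically-by-project-priority optimum) and prove facts (a) and (b) for \emph{that} selected optimum; the exchange lemma is a useful ingredient for this but is not by itself sufficient. Your final remark that non-laminar constraints break the exchange property does align with Theorem~\ref{obs:non-laminar}.
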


 The proof uses Theorem~\ref{thm:nash_equilibrium} (given later) with some discussion which we give in Appendix~\ref{sec:knapsack_lam}. Note that the result also extends to the case of fractional knapsack considered by Goel et al. \cite{goel2019knapsack}, since each project can then be considered a collection of unit-cost projects.

 The 1-laminar labelling (Assumption~\ref{ass:1lam}) is a fairly general setup for defining the funding constraints in practice. Also, as it turns out, technically, Assumption~\ref{ass:1lam} is necessary for Theorem~\ref{propostion:strategyproof} to hold. Towards this, we have the following result, proof of which is given in Appendix~\ref{sec:nonlam}.

\begin{theorem}\label{obs:non-laminar}
    PBPI with singleton groups and unit-cost projects is not strategyproof if the labelling  for the funding constraints does not satisfy the 1-laminar structure (Assumption~\ref{ass:1lam}).
\end{theorem}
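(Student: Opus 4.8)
The plan is to prove this by exhibiting an explicit \emph{manipulable instance}: assuming the labelling violates Assumption~\ref{ass:1lam}, I will construct a set of projects, a non-$1$-laminar labelling with funding constraints, and a vote profile in which some voter strictly gains by misreporting, so that truthful voting is not weakly dominant.

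First I would extract the minimal witness of non-$1$-laminarity. Since Assumption~\ref{ass:1lam} fails, there are two labels $x,y$ whose project sets \emph{properly overlap}: $P_x\cap P_y\neq\emptyset$, $P_x\setminus P_y\neq\emptyset$, and $P_y\setminus P_x\neq\emptyset$. Three projects suffice: let $P_x=\{p_1,p_2\}$ and $P_y=\{p_2,p_3\}$ (so $p_2$ is shared and $p_1,p_3$ are private), and impose the maximum constraints $B^{\max}_x=B^{\max}_y=1$ with overall budget $B=2$ in Definition~\ref{def:gf}. Under unit costs (Assumption~\ref{ass:unitcost}) and singleton groups, a vote reduces to an approval set and the utility of Definition~\ref{def:utility} is just the number of a voter's approved projects that are funded, so social welfare maximization selects a feasible outcome maximizing total approvals. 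Here the feasible outcomes are $\{\emptyset,\{p_1\},\{p_2\},\{p_3\},\{p_1,p_3\}\}$, whose \emph{maximal} members $\{p_1,p_3\}$ and $\{p_2\}$ have \emph{different sizes}. This is exactly the failure of the (laminar-)matroid structure underlying Theorem~\ref{propostion:strategyproof}, and it is the structural source of manipulability.

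Next I would build the profile. Take two background voters each approving only $p_2$, and one further voter whose true approval set is $\{p_1\}$. Voting truthfully, the tallies are $(a_1,a_2,a_3)=(1,2,0)$, so $\{p_2\}$ has welfare $2$ and $\{p_1,p_3\}$ has welfare $1$; the mechanism outputs $\{p_2\}$ and the last voter gets utility $0$. If instead that voter also approves $p_3$ (a project they do not actually value), the tallies become $(1,2,1)$, the welfare of $\{p_1,p_3\}$ rises to $2$, and the outcome switches to $\{p_1,p_3\}$, giving the voter utility $1$ because their true favorite $p_1$ is now funded. Thus padding the vote with an irrelevant private project lets the voter ``complete'' the larger feasible bundle containing $p_1$, a move that is never beneficial under $1$-laminar constraints.

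The step I expect to be the main obstacle is the \emph{integrality/tie-breaking} subtlety: because one voter shifts any single tally by exactly one, the decisive comparison ($a_1+a_3$ versus $a_2$) cannot be moved from a strict loss to a strict win, so the deviation lands precisely on a welfare \emph{tie} between $\{p_1,p_3\}$ and $\{p_2\}$. I would resolve this in one of two ways. The clean route invokes the mechanism's tie-breaking: under the natural convention that ties favor outcomes funding more projects (exhausting more budget), $\{p_1,p_3\}$ is selected and the manipulation strictly succeeds. The fully tie-break-robust route enlarges the gadget with a free ``anchor'' project that the voter genuinely values and a larger budget, so that the profitable deviation adds \emph{two} private projects at once and shifts the decisive comparison by two, producing a strict welfare gap independent of tie-breaking. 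I would present the three-project instance for intuition and, if robustness to arbitrary tie-breaking is demanded, carry out the enlarged swing-by-two construction, which is the only genuinely delicate part of the argument.
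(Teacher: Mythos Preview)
Your approach is essentially the paper's: both extract two properly overlapping labels, build a three-project gadget in which the maximal feasible outcomes have different cardinalities, and show that a voter gains by padding their ballot with a project they do not truly value. The one substantive difference concerns exactly the point you flagged as the obstacle. The paper sets $B^{\min}=B^{\max}=1$ for \emph{both} labels (not just upper bounds) and arranges the profile so that the welfare \emph{tie occurs under truthful voting} and is broken against the manipulator by a fixed project-level tie-breaking order; the deviation then produces a \emph{strict} welfare advantage, so no tie-breaking assumption is needed at the deviation step. In your construction the tie lands after the deviation instead, which is why you needed the ``favor larger outcomes'' convention or the swing-by-two enlargement. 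The paper's placement of the tie is the simpler fix: since a deterministic tie-breaking rule is part of the mechanism anyway, one may choose it as part of the instance and put the tie where it does no harm, avoiding your enlarged gadget entirely.
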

We now study the effect of considering project interactions on the incentives to vote truthfully.
Unfortunately, as is seen often in social choice theory \cite{gibbard1973manipulation,satterthwaite1975strategy}, voters in PBPI may be incentivised to deviate from their truthful votes to attain a better outcome for themselves when project interactions are considered.

\begin{observation} \label{obs:not_sp}
    PBPI is not strategyproof even with Assumption~\ref{ass:unitcost}. 
\end{observation}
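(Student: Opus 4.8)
The plan is to exhibit a concrete small vote profile in which a single voter strictly benefits from misreporting, while the profile respects all the constraints of PBPI (including Assumption~\ref{ass:unitcost}, unit costs). Since Observation~\ref{obs:not_sp} concerns PBPI \emph{with} project interactions, the natural source of manipulability is the complementarity structure captured by the first term of the utility function in Definition~\ref{def:utility}: a voter whose utility from a group is gated on the entire approved bundle being funded (i.e.\ $t^i_g = 1$) creates a discontinuity that the greedy/SW-maximizing aggregation cannot treat additively. I would build the example around exactly this discontinuity.

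First I would take a single group $g$ containing two or three unit-cost projects, set a budget $B$ that is tight enough that the SW-maximizer must choose between a complementary bundle and some competing singleton approvals, and populate the profile with a handful of voters whose truthful votes leave the manipulator's complementary bundle \emph{just} short of being funded. Concretely, I would arrange it so that under truthful reporting the aggregate support for the individual projects in the manipulator's bundle, treated additively by the mechanism, is not enough to pull the whole bundle into the outcome, so the manipulator receives zero utility from the group (because $\mathbb{I}(s_g^i \subseteq Q) = 0$). Then I would show that by strategically \emph{inflating} their report — for instance raising $f_g^i$, or approving additional projects, or toggling $t_g^i$ — the manipulator shifts the SW-maximizing outcome so that their genuinely desired complementary bundle now clears, yielding strictly positive utility. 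The key is to make the competing outcomes close in social welfare so that one voter's misreport tips the argmax.

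The main obstacle, and the step requiring the most care, is verifying that the SW-maximizing outcome genuinely flips between the truthful and manipulated profiles. This requires computing $U(Q)$ for the two or three candidate outcomes under both reports and checking the argmax changes in the direction that helps the manipulator, while confirming the manipulator's \emph{true} utility (evaluated with their honest $v^i$, not the reported one) strictly increases. I would also need to fix a tie-breaking rule and confirm the strict inequality is robust to it, so that the manipulation is not an artifact of a knife-edge tie. Because Assumption~\ref{ass:unitcost} is in force, I cannot lean on cost asymmetries (unlike the footnote example illustrating why unit costs matter for Knapsack); the manipulation must be driven purely by the complementarity indicator, which is precisely why project interactions, and not costs, are the culprit here.

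Finally, I would keep the instance minimal — ideally two projects in one complementary group plus one competing project, with three or four voters — both to make the verification a short finite check and to emphasize that manipulability arises already in the simplest interacting setting. I would close by noting that this stands in contrast to Theorem~\ref{propostion:strategyproof}, isolating project interactions (rather than the funding constraints) as the feature that destroys strategyproofness, and motivating the restricted Nash-equilibrium result of Theorem~\ref{thm:nash_equilibrium} that follows.
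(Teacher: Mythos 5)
Your overall strategy --- exhibit a small unit-cost profile where the all-or-nothing indicator $\mathbb{I}(s_g^i \subseteq Q)$ lets one voter tip the SW argmax --- is exactly the mechanism behind the paper's counterexample, so the approach is right in spirit. But for a non-strategyproofness claim the concrete instance \emph{is} the proof, and you never produce one: the profile, the candidate outcomes, the social-welfare computations under both reports, and the check that the manipulator's \emph{true} utility strictly increases are all deferred to ``I would arrange it so that\ldots''. As written this is a plan, not a proof.

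There is also a substantive wrinkle in the plan itself. You cast the manipulator as a voter whose \emph{truthful} vote already has $t_g^i = 1$ and propose to manipulate by ``raising $f_g^i$, or approving additional projects, or toggling $t_g^i$.'' For such a voter, the truthful report is already the most favorable possible in-group report for their bundle whenever $f_g^i$ equals the bundle's cost: the reported utility is capped at $\min(f_g^i, \sum_{j \in Q \cap s_g^i} c_j)$, so inflating $f_g^i$ changes nothing; approving extra projects only makes the indicator harder to satisfy; and toggling to $t_g^i = 0$ \emph{raises} the reported welfare of the competing partial outcomes, moving the argmax the wrong way. Your lever only works if the truthful $f_g^i$ is strictly below the bundle cost, or if the manipulation acts outside the group. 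The paper sidesteps all of this by inverting the roles: the manipulator's true preferences are \emph{independent} (voter $3$ only wants $p_3$ in group $1$), and the manipulation is a \emph{false} claim of complementarity over the enlarged set $\{p_1,p_2,p_3\}$ with $f^3_1 = 3$, which concentrates a reported utility of $3$ on exactly the outcome containing $p_3$ and $0$ on the truthful winner $\{p_1,p_2,p_4\}$, flipping the argmax and gaining voter $3$ one unit of true utility. If you fix a version of your example in which the truthful $f_g^i$ is strictly below the bundle cost, your direction also goes through, but that caveat must be made explicit and the instance actually written down and verified.
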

\begin{proof}[Proof with complementarity project interaction.]
    Consider the following example. 
   
        There are $n=3$ voters and $m=6$ projects, each with cost $c_j = 1.$ The budget is $B=3$. There are no 1-laminar funding constraints. There are two  groups of projects $z_1 = \{p_1,p_2,p_3\},$ and $z_2 = \{p_4,p_5,p_6\}.$ In the truthful votes, all projects are independent for all three voters. Voters $1$ and $2$ approve projects $p_1,p_2,$ and $p_4$ and set funds $f^1_1  = f^2_1 = 2$ and $f^1_2 = f^2_2 = 1.$ Whereas voter $3$ approves projects $p_3,p_5,$ and $p_6,$ and sets $f^3_1  = 1$ and $f^3_2 = 2.$ In this case, the outcome of SWM-PB is $\{p_1,p_2,p_4\}.$

   Voter $3$ has the incentive to modify their vote and set $f^3_1 =3$, approve all of $z_1,$ and report that these projects are complements, i.e., $t^3_1 = 1$. Now, $\{p_1, p_2, p_3\}$ is the outcome of SWM-PB -- it increases the utility of voter $3$ by 1 unit.  
\end{proof}

The example above shows how the complementarity project interaction can potentially lead to profitable strategic deviations for voters. However, this type of project interaction is not required to render PBPI not strategyproof.

\begin{observation} \label{obs:not_sp_not_complementary}
    PBPI is not strategyproof under Assumption~\ref{ass:unitcost} when there are only substitutes allowed. 
\end{observation}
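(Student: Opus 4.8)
The plan is to establish the claim by exhibiting an explicit vote profile with at least one non-singleton group of substitute projects in which some voter strictly benefits from misreporting. I would keep the instance small---two groups of unit-cost projects and a handful of voters, mirroring the complementary counterexample of Observation~\ref{obs:not_sp}---and specify, for the designated manipulator, a truthful vote $(f^i, s^i, t^i)$ with $t^i_g = 0$ throughout (consistent with ``only substitutes'') together with a deviation that leaves $t$ unchanged but alters the fund allocation $f^i$ and/or the approval set $s^i$. The target is to show that the SWM-PB (or FC-SWM-PB) outcome under the truthful profile gives the manipulator strictly less utility, in the sense of Definition~\ref{def:utility}, than the outcome under the deviated profile.

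First I would choose the manipulator's true preference so that they value a modest amount from two different groups but, because a substitute group saturates at a fund level $f^i_g$ strictly below the total cost of its approved projects, their truthful fund vector leaves them saturated in one group while another valued project goes unfunded. The deviation I would use is a reallocation of the fund budget $\sum_g f^i_g \le B$: the manipulator lowers $f^i_g$ in the group where the saturation cap is already met (incurring no loss in their own term $\min(f^i_g, \sum_{j \in Q \cap s^i_g} c_j)$ there) and raises the allocation on the pivotal group, thereby increasing that group's contribution to the social welfare enough to move the $\argmax$ onto an outcome that funds the manipulator's valued project. I would then tabulate $U(Q)$ for the few candidate outcomes under both reports and verify that the preferred outcome becomes the unique maximizer after the deviation while the truthful maximizer excludes the valued project.

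The step I expect to be the main obstacle is securing a \emph{strict} gain. With unit costs each project contributes at most one unit to any single voter, and a voter's truthful report already places the extreme value ($0$ or $1$) on every individual project, so no deviation can improve the manipulator's leverage in a contest decided by a single project swap; the exchange argument underlying strategyproofness of Knapsack voting applies verbatim to such single-swap contests. Consequently the construction must force the relevant comparison to be between two outcomes that differ in \emph{two or more} projects simultaneously, so that a coordinated reallocation of the fund budget---which substitute satiation makes available---tips the social-welfare comparison in a way that no single-project deviation could. I would engineer this either by balancing the supporting votes so that the manipulator's two-unit reallocation is exactly pivotal between a pair of candidate outcomes, or, if a clean balance is hard to obtain from the overall budget alone, by invoking a $1$-laminar maximum-funding constraint (Definition~\ref{def:gf}) that renders the feasible set lumpy enough to make the two-project swing decisive; either way the delicate part is checking that the deviating profile makes the manipulator's preferred bundle the strict social-welfare optimum.
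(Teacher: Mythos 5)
Your diagnosis of the difficulty is exactly right --- a single-project swap cannot be manipulated by a voter whose truthful report already puts the extreme value on each individual project, so the contest must be engineered between two bundles that differ in at least two projects, with the manipulator made pivotal. But for a non-strategyproofness claim the counterexample \emph{is} the proof, and you never actually produce one: the vote profile, the tabulation of $U(Q)$ over candidate outcomes, and the verification that the manipulated optimum is unique are all deferred, and you yourself flag the last of these as ``the delicate part.'' Moreover, the deviation mechanism you lead with --- lower $f^i_g$ where you are saturated, raise it on the pivotal group --- is more constrained than the sketch suggests. Lowering $f^i_g$ in a group reduces your reported contribution $\min(f^i_g,|Q\cap s^i_g|)$ to \emph{every} competing outcome that funds at least the reduced level of your approved projects there, typically by the same amount, so it does not move the comparison between the two contending bundles; and raising $f^i_g$ only changes relative reported welfare if your preferred bundle funds strictly more of your approved projects in $g$ than the truthful winner does, in which case your \emph{true} (saturated) utility from $g$ is the same in both and the strict gain must be smuggled in from another group. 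A working instance along these lines may exist, but it is not exhibited, and nothing in the sketch rules out that the reallocation simply shifts all candidate outcomes by equal amounts.

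The paper's construction takes a different and simpler route that you may want to internalize: the manipulator's own vote in the substitute group is a \emph{singleton} approval, and the manipulation happens entirely outside that group. Concretely: $n=7$ voters, $B=2$, unit costs, group $z_1=\{p_1,p_2,p_3\}$ with $p_1$ approved by voters $1,2,3$, $p_2$ by $4,5,6$, $p_3$ by $2,3,4,5,7$, all with $f_1^i=1$ (so voters $2,3,4,5$ are saturated substitutes --- this is what caps $U(\{p_1,p_3\})$ and $U(\{p_3,p_x\})$ at $6$, tying them with $U(\{p_1,p_2\})=6$, which wins on tie-breaking); each voter $i$ also approves singleton $p_{i+3}$. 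Voter $7$, who gets $0$ from $\{p_1,p_2\}$, deviates by approving $p_4$ instead of $p_{10}$, lifting $U(\{p_3,p_4\})$ to $7$ and securing $p_3$. So the substitution saturation of the \emph{other} voters creates the non-additive tie, and the manipulator free-rides by approving an unvalued project that happens to be bundled with a valued one --- no tampering with their own $f$ in the substitute group is needed. Your plan is salvageable, but as written it is a search strategy rather than a proof, and its central lever is the one least likely to close the argument.
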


Despite the negative result of Observation~\ref{obs:not_sp}, it is interesting to study special classes of vote profiles where voters do not have incentives to deviate from truthful voting even with project interactions.
We now describe such a class of vote profiles.

\subsection{Case with a Strict Total Order in Subgroups of Substitutes} 
Here we give a class of vote profiles for which truthful voting is a Nash equilibrium.
\emph{Subgroups} of a group are defined by an arbitrary partition of the projects in the group. 
In this class of vote profiles, there is no complementarity protect interaction, voters agree on the type of project interaction in a group, and for substitutes, it entails  \emph{some} partition of groups into subgroups such that there is a strict total order\footnote{A strict total order on a set $S$ is a relation on $S$ that is irreflexive, anti-symmetric, transitive, and every pair of elements of $S$ is comparable.} of projects within each subgroup. 
We further need that all voters who approve project $p$ also approve all projects of a higher order than $p$. Formally:

%

%

\begin{definition}[Special Vote Profile]
\label{def:vote_profiles} 
For each non-contradictory group $g$, one of the following holds:

    1. (Independents). All voters $i \in [n]$ consider the group to be of independent projects and set $f^i_g = |s_g^i|.$
    
    2. (Substitutes). There exists a partition of $z_g$ into ``sub-groups'' and there is a  strict total order in each sub-group such that every voter who approves a project $p \in z_g$ also approves all projects of a higher order than $p$. No voter approves projects from multiple subgroups in a group.
\end{definition}
%
%
Example~\ref{ex:vote_profile} 
illustrates the vote profiles in Definition~\ref{def:vote_profiles}.

Observe that we do not make any assumption on fund allocations $f^i_g$ for substitute projects. Also, this structure does not assume which groups any voter funds. However, it does assume that voters who allocate funds to a group agree on the type of project interaction in any group.
Also, note that having a strict total order on some partition of a group into subgroups is a strictly weaker condition than having a strict total order over entire groups of projects.  
As a clarification, while studying the vote profile of Definition~\ref{def:vote_profiles}, we do not make any assumption on the space of expressible votes, other than that there are no complementarity project interactions considered.

The class of vote profiles in Definition~\ref{def:vote_profiles} is important to study since voters who are residents of an area may agree on the relative usefulness of projects in a group but may not agree on how many projects must be done from any group.
%


%


Having explained the case of vote profiles of Definition~\ref{def:vote_profiles}, we now discuss our results on it. First, when the vote profile $V$ is per Definition~\ref{def:vote_profiles}, a greedy aggregation algorithm (Algortihm~\ref{alg:greedy-gf}) produces a social welfare-maximizing outcome under an overall budget constraint and 1-laminar funding constraints. This is not particularly surprising since in contradictory and independent groups, utilities of individual projects are additively separable, and in substitute groups, the strict total order in subgroups provides an optimal ordering to fund projects. The 1-laminar funding constraints dictate which projects are eligible for funding at any step of the algorithm. 

The surprising result is that the same thing also holds when one voter's vote doesn't follow Definition~\ref{def:vote_profiles} (Lemma~\ref{lem:greedy}). This one vote could disagree with others on which groups are independents and which are substitutes, could approve projects across subgroups, or violate the strict total order assumption within subgroups. The fact that the greedy algorithm still produces a social welfare-maximizing outcome when one vote does not follow Definition~\ref{def:vote_profiles} has important implications for the study of strategic voting in our model.

We can leverage Lemma~\ref{lem:greedy} to show that when the truthful votes (underlying, not observed)  are per Definition~\ref{def:vote_profiles}, truthful voting is a  Nash equilibrium\footnote{We do not obtain strategyproofness here; if multiple votes violate  Definition~\ref{def:vote_profiles}, then there can arise opportunities for profitable strategic deviations for voters.} (Theorem~\ref{thm:nash_equilibrium}). This result has important practical consequences, which we discuss later in the section after giving the results formally.
\subsubsection{Greedy Algorithm} We first describe the Greedy Algorithm~\ref{alg:greedy-gf}, which we use in Lemma~\ref{lem:greedy} and Theorem~\ref{thm:nash_equilibrium}. First, without the 1-laminar funding constraints, a greedy algorithm is simple; it will construct the outcome $Q$ in $B$ steps, adding a project in each step which brings the best improvement in social welfare.
With 1-laminar funding constraints, our Greedy Algorithm~\ref{alg:greedy-gf} is run in two passes. 
In the first pass, the algorithm fulfils the minimum funding constraints\footnote{We assume that the funding constraints $B^{\min}$ and $B^{\max}$ are such that at least one valid outcome of PB exists.} given by $B^{\min}$. 
Recall from Observation~\ref{obs:tree} that the 1-laminar labelling can be represented as a rooted tree denoted by $\mathcal{T}_L$, and the levels of the tree form a hierarchy of the labellings. 
Within this pass, the algorithm proceeds in the reverse hierarchy order for the type labelling. 

\begin{algorithm}[tb]
\SetAlgoLined
 \SetAlgorithmName{Greedy Algorithm}{}

Denote the set of outcomes which satisfy the 1-laminar `maximum' funding constraints by $\mathbb{Q}$\\
Initialize $Q \leftarrow \emptyset$\\
1. First Pass \Comment{Satisfy minimum funding constraints}\\

\For{Traverse $\mathcal{T}_L$ in reverse order\footnotemark, index node by $l$}{
\While{$|Q \cap P_l| < B^{\min}_l$}{

$Q \leftarrow Q \cup \Big\{ \argmax\limits_{\rho \in \{ j | j \in P_l\setminus Q,~j\cup Q \in  \mathbb{Q}\}}~~U(Q\cup \{\rho\}) \Big\}$

}
}
2. Second Pass \Comment{Complete $Q$ while respecting funding constraints}\\
\While{$|Q| < B$}{
$Q \leftarrow Q \cup \Big\{   \argmax\limits_{\rho \in \{j | j \notin Q,  \  j \cup Q \in \mathbb{Q} \}}~~U(Q\cup \{\rho\})     \Big\}$
}
    \caption{Input: $(Z,B,V,L,P_l~\forall~l \in L, B^{\max}, B^{\min})$, Output: $Q$}
\label{alg:greedy-gf}
\end{algorithm}
\footnotetext{A traversal of a rooted tree in the reverse order is defined as the traversal which starts with the leaves in arbitrary order, followed by the deletion of all the leaves, and repeating the process till the entire graph is traversed. }
In the second pass, the algorithm produces the outcome $Q$ while respecting all the maximum funding constraints while adding one project to the outcome at a time. The algorithm pseudocode is given formally in Greedy Algorithm~\ref{alg:greedy-gf}. We now state and prove the main results of this section.

In this subsection, we assume that there are no complementarity project interactions, and in fact, the complementarity question is disabled and we have $t^i_g$ for all voters $i \in [n]$ and all groups $g \in [r].$

\begin{lemma}  \label{lem:greedy}
Under Assumptions~\ref{ass:1lam},~\ref{ass:group_label}, and~\ref{ass:unitcost}, and when at most one vote deviates from Definition~\ref{def:vote_profiles}, the outcome of the Greedy Algorithm~\ref{alg:greedy-gf} is a solution to FC-SWM-PB.
\end{lemma}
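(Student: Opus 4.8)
The plan is to prove optimality of the two-pass Greedy Algorithm~\ref{alg:greedy-gf} by combining a matroid view of the feasibility region with an exchange argument tailored to the special vote profile. First I would record the structural facts the argument rests on. Since $t^i_g=0$ throughout this subsection, Definition~\ref{def:utility} together with Assumption~\ref{ass:unitcost} gives $u_i(Q)=\sum_g \min(f^i_g,|Q\cap s^i_g|)$, so each voter's contribution from a group is a monotone concave function of the number of its approved projects that are funded; hence $U$ is monotone submodular and every per-group marginal is non-increasing as more of that group enters $Q$. On the constraint side, by Observation~\ref{obs:tree} the labels form a rooted tree with the overall budget as the root, so the outcomes obeying only the \emph{maximum} constraints are exactly the independent sets of a laminar matroid $M$. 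The minimum constraints are covering constraints rather than matroid constraints, which is precisely why the algorithm discharges them in a separate first pass, traversing $\mathcal{T}_L$ in reverse (leaf-first) order so that each descendant's mandatory funding is counted toward its ancestor before the ancestor is processed, and only then optimizes greedily inside $M$.

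Next I would settle the case where \emph{all} votes obey Definition~\ref{def:vote_profiles} (the part the surrounding text calls unsurprising). In an independent group each approved project contributes a fixed amount, since its approvers set $f^i_g=|s^i_g|$ and so never saturate; its contribution is therefore modular. In a substitute subgroup the nestedness requirement---every voter approving a project also approves all higher-order projects---implies that funding the top $q$ projects weakly dominates funding any other $q$ projects in that subgroup, and that the marginal value $\Delta_t=|\{i:\, |s^i_g|\ge t \text{ and } f^i_g\ge t\}|$ of the $t$-th project in the canonical order is non-increasing in $t$. I would use this to \emph{linearize} the objective: assign each project a weight $w_\rho$ equal to its marginal in the canonical order, so that $U(Q)=\sum_{\rho\in Q}w_\rho$ on every prefix-respecting $Q$ and $U(Q)\le\sum_{\rho\in Q}w_\rho$ in general. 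Because within each subgroup these weights are revealed in decreasing order, the greedy always has the current top-of-subgroup project available as its best in-subgroup move, so its trajectory coincides with the weighted-matroid greedy on the weights $w_\rho$; optimality then follows from the classical exactness of greedy for modular maximization over a matroid, with the first pass correctly fixing the laminar minimum (covering) constraints as described above.

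The main obstacle is the single deviating vote $d$. Writing $U=U_{-d}+u_d$, the conforming part $U_{-d}$ linearizes as above, but $u_d$ can break the canonical order: for a fixed number of projects funded in a substitute subgroup, the conforming voters prefer the top ones whereas $d$ merely wants as many of \emph{its} approved projects funded as possible, and the two can conflict. My plan is to compare the greedy output $G$ with an optimal $O$ by an exchange argument processed in greedy's pick order, maintaining a feasible solution that agrees with $G$ on a growing prefix. To bring a greedy-chosen project $\rho$ into the maintained solution I would remove a project $\rho'$ of the \emph{same group}; by Assumption~\ref{ass:group_label} the two carry identical labels, so the swap leaves every count $|Q\cap P_l|$ unchanged and hence preserves feasibility in $M$, while the matroid exchange property guarantees a suitable $\rho'$ exists. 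The deviating voter is contained in the welfare bookkeeping: since $u_d$ is a sum of $0/1$ non-increasing step marginals and there is only \emph{one} such voter, any single same-group swap can cost $d$ at most one unit, which I would charge against the conforming gain certified by greedy having chosen $\rho$ as a current maximizer of the \emph{total} marginal. I expect the delicate point to be showing this charging never runs a deficit---that the maximality of the combined (conforming plus $d$) marginal is exactly strong enough to dominate the one unit a re-ordering can take from $d$---and that the first-pass minimum-fulfilment interacts correctly with these swaps, both of which I would resolve by a per-subgroup case analysis built on the non-increasing marginals established at the outset.
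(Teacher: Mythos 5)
Your high-level framing (a laminar matroid for the maximum constraints, linearization of the conforming votes, an exchange argument for the one deviator) is a genuinely different route from the paper's, but as written it has a gap exactly where the lemma is hard. The paper does not run an exchange between the greedy output and an optimum; it proves a structural fact: for every group $g$ and every size $b$, the welfare-maximizing size-$b$ subset of $z_g$ is nested in $b$, i.e.\ $\mathcal{P}^{\text{out}}_{\{g\}}(b)=\emptyset$, \emph{even with} the one deviating vote, via a five-way case analysis on whether the deviator approves both, neither, or exactly one of the would-be swapped-in/out projects; Claim~\ref{claim:greedy} then converts nestedness into optimality of both passes. Your proposal replaces this with a charging scheme --- ``any same-group swap costs $d$ at most one unit; charge it against the conforming gain certified by greedy's choice'' --- and you yourself flag that showing the charging never runs a deficit is the delicate point. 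That is not a detail: the objective is (modular on prefix-respecting sets) plus a truncated-rank term from the deviator, and greedy is not exact for such sums over a matroid in general; the reason it is exact here is precisely the nestedness forced by Definition~\ref{def:vote_profiles} on the other $n-1$ votes, which your sketch never establishes. Until the per-subgroup case analysis you defer to the end is actually carried out, the proof is a plan rather than an argument.

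Two smaller problems. First, your linearization inequality is backwards: with one conforming voter approving the prefix $\{p_1,p_2\}$ (order $p_1\succ p_2$) and $f=1$, the set $Q=\{p_2\}$ has $U(Q)=1$ but $\sum_{\rho\in Q}w_\rho=\Delta_2=0$, so $U(Q)\le\sum_{\rho\in Q}w_\rho$ fails; the correct statement is that every $Q$ is weakly dominated by the prefix-respecting set with the same per-subgroup cardinalities (feasibility is preserved since all projects in a group share labels by Assumption~\ref{ass:group_label}), after which modularity holds on the relevant sets. Second, you assert without proof that the leaf-first first pass ``correctly fixes'' the minimum constraints; the feasible outcomes under both lower and upper laminar bounds are not the independent sets of a matroid, so the classical matroid-greedy theorem does not apply directly. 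The paper handles this with the recursion for the implied bounds $\tilde{B}^{\min}_l,\tilde{B}^{\max}_l$ and an induction over the levels of $\mathcal{T}_L$; some argument of that kind is needed in your write-up as well.
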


\begin{proof}   

If all projects have additively separable social welfare, Greedy Algorithm~\ref{alg:greedy-gf} is trivially optimal for maximizing social welfare.

As a warm-up, consider the case where all votes are per Definition~\ref{def:vote_profiles}. Three varieties of project groups under Definition~\ref{def:vote_profiles} exist -- contradictory, independent, and substitutes. For contradictory and independent groups, it is easy to observe that the social welfare of all projects is additively separable. 

For groups of substitutes, we define the following notation for the subgroups and the strict total order therein. There is a partition $\Lambda_g = \{\lambda_{g,k}\}_{k \in [q_g]}$ of $z_g$ such that each partition is a subgroup and there is a strict total order in each subgroup. Since each voter approves projects from at most one subgroup $\lambda_{g,k}$ per Definition~\ref{def:vote_profiles}, the social welfare of these subgroups is additively separable. Finally, observe that in any subgroup $\lambda_{g,k}$, from any outcome, if we replace a lower-ranked project $p_l \in \lambda_{g,k}$ with a higher-ranked project $p_h \in \lambda_{g,k},$ then the social welfare of the outcome cannot decrease since under Definition~\ref{def:vote_profiles}, all voters who get utility from $p_l$ also get equal utility from $p_h.$  Therefore, greedily selecting projects in order of rank (given by the strict total order in the subgroup) is optimal for maximizing social welfare from subgroup $\lambda_{g,k}$ for any amount of funding to the subgroup. This proves that greedily selecting the outcome of PB is optimal for SWM-PB in this case, where everyone votes per Definition~\ref{def:vote_profiles}.

We now consider the case where one vote does not satisfy Definition~\ref{def:vote_profiles}. Let this be voter $i$ and let their vote be $\{f^i_g, s_g^i\}_{g \in [r]}.$\footnote{Recall that we assume that there are no complementarity project interactions in this case and there is no $t_g^i$.} Denote the vote profile of the other voters by $V_{-i}.$ Observe that for contradictory groups, even under a deviation from Definition~\ref{def:vote_profiles}, the social welfare of all projects are additively separable. 

Consider non-contradictory groups. Let $Q_{\{g\}}(b)$ denote the size-$b$ \emph{social welfare-maximizing} subset of $z_g.$ While the social welfare-maximizing subset is not always unique, we define $Q_{\{g\}}(b)$ to be unique and decided as per the given deterministic tie-breaking rule. 


Let $\mathcal{P}_{\{g\}}^{\text{in}}(b)$ denote the set $Q_{\{g\}}(b)\setminus Q_{\{g\}}(b-1)$   and   $\mathcal{P}_{\{g\}}^{\text{out}}(b)$ denote the set $Q_{\{g\}}(b-1) \setminus Q_{\{g\}}(b)$.  We first make the following claim.
\begin{claim}  \label{claim:greedy}
If $\mathcal{P}_{\{g\}}^{\text{out}}(b)$ is an empty set for all groups $g \in [r]$ and all levels of funding $b \in [|z_g|],$ then the Greedy Algorithm~\ref{alg:greedy-gf} is optimal to solve FC-SWM-PB. 
\end{claim}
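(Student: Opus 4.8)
The plan is to interpret the claim as a monotonicity/exchange statement about the greedy process and to argue that, under the stated hypothesis, the two-pass greedy never makes an ``irreversible'' mistake. The key structural fact I would extract is that $\mathcal{P}_{\{g\}}^{\text{out}}(b) = \emptyset$ for all $g$ and all $b$ means the social welfare-maximizing subsets are \emph{nested}: $Q_{\{g\}}(b-1) \subset Q_{\{g\}}(b)$ for every group $g$ and every funding level $b$. In other words, increasing the budget allotted to a single group never forces us to drop a previously chosen project; the optimal in-group selection only grows. This nestedness is exactly the property that makes a greedy ``add one project at a time'' strategy safe, because the marginal contribution $U(Q_{\{g\}}(b)) - U(Q_{\{g\}}(b-1))$ of the $b$-th dollar to group $g$ is well-defined and the greedy can treat each group as a sequence of divisible increments.

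From this nestedness I would argue as follows. First I would reduce the global problem to choosing, for each group $g$, how much funding $b_g$ to give it, subject to $\sum_g b_g \le B$ and the 1-laminar funding constraints, since within a group the optimal content $Q_{\{g\}}(b_g)$ is pinned down (and, by nestedness, realized by an increasing chain). Then I would establish a \emph{diminishing-returns / submodularity-type} property of the marginal gains: letting $\Delta_g(b) := U(Q_{\{g\}}(b)) - U(Q_{\{g\}}(b-1))$ be the value of the $b$-th unit in group $g$, I would show these marginals are non-increasing in $b$ within each group (this should follow from the substitutes / independent structure of Definition~\ref{def:vote_profiles} for the $n-1$ conforming voters, with the one deviating voter's contribution handled because their utility in each group is still of the min-with-saturation form). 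Given non-increasing marginals within each group, the greedy that repeatedly picks the globally largest available marginal $\Delta_g(b)$ (respecting $\mathbb{Q}$) is the standard optimal exchange argument for a matroid/knapsack-with-laminar-constraints, and the first pass (minimum constraints) plus second pass (fill to $B$) respects exactly the laminar feasibility polytope $\mathbb{Q}$.

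The core of the proof is the optimality exchange argument against the laminar constraints. I would take an optimal solution $Q^*$ and the greedy output $Q$, and if they differ, locate a project $\rho$ that greedy included but $Q^*$ did not (or vice versa) and swap it, showing social welfare does not decrease and laminar feasibility is preserved. The hypothesis $\mathcal{P}^{\text{out}} = \emptyset$ guarantees the swap stays inside the nested chain for each group, so we never violate the within-group optimal ordering; the laminar structure guarantees that swapping a project under one type label for another under a disjoint-or-nested label keeps all the $B^{\min}_l \le \cdot \le B^{\max}_l$ constraints satisfied, because the laminar family is closed under the relevant exchanges (a point where I would invoke Observation~\ref{obs:tree} and the reverse-hierarchy order of the first pass to argue minimum constraints at deeper nodes are filled before their ancestors' slack is consumed).

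The main obstacle I expect is reconciling the two-pass structure with the exchange argument: the first pass may be forced to include projects solely to meet minimum constraints $B^{\min}_l$, and these projects need not be the globally highest-marginal ones, so I must argue that these ``forced'' inclusions do not preclude optimality. The clean way to handle this is to observe that every feasible outcome must devote at least $B^{\min}_l$ to each type $l$, so the minimum-constraint spending is common to $Q$ and $Q^*$ up to a within-group exchange; by nestedness the greedy's forced choices can be matched to $Q^*$'s forced choices in the same groups without loss. Thus the only discretionary spending is in the second pass, where the plain diminishing-returns greedy under the laminar-matroid feasibility $\mathbb{Q}$ is optimal. The delicate bookkeeping is ensuring the reverse-hierarchy traversal fills the deepest minimum constraints first so that an ancestor's minimum is automatically partially satisfied by its descendants' forced projects, avoiding double-counting; I would make this precise by induction on the depth of $\mathcal{T}_L$.
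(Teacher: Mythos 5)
Your proposal is correct in outline and follows the same broad strategy as the paper's proof (read the hypothesis as nestedness of the per-group optima $Q_{\{g\}}(b-1)\subseteq Q_{\{g\}}(b)$, then run a greedy-exchange/induction argument over the laminar tree for the two passes), but the pivotal intermediate step is genuinely different. The paper does not isolate a diminishing-returns property: it asserts that per-group nestedness plus additive separability of welfare across groups already gives $\mathcal{P}_{G}^{\text{out}}(b)=\emptyset$ for arbitrary collections $G$ of groups, and then inducts over the levels of $\mathcal{T}_L$ using recursively defined effective bounds $\tilde{B}_l^{\min},\tilde{B}_l^{\max}$ to show the first pass is welfare-maximizing at size $\tilde{B}_0^{\min}$ and the second pass remains feasible and optimal. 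You instead make the concavity of the per-group value functions (the marginals $\Delta_g(b)$ non-increasing in $b$) an explicit lemma and then exchange directly against an optimal $Q^*$ under the laminar feasibility structure. Your route buys something real at exactly the point the paper glosses over: nestedness of each $Q_{\{g\}}(\cdot)$ alone does \emph{not} imply nestedness for unions of groups (a group whose optimal chain has marginals $1$ then $10$, sitting next to a singleton group worth $5$, has nested per-group optima but non-nested joint optima at budgets $1$ and $2$), so lifting the hypothesis to collections of groups genuinely requires the diminishing-returns property you identify. It does hold in the relevant setting because complementarity is disabled, so each voter's within-group utility is a capped coverage function, hence the group welfare is submodular, and submodularity together with the optimality of $Q_{\{g\}}(b)$ along the nested chain forces $\Delta_g(b)\ge\Delta_g(b+1)$; this same fact is what licenses identifying the algorithm's project-at-a-time greedy with your marginal-at-a-time greedy. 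The one place to tighten your write-up is precisely there: you defer it with ``should follow,'' but it is the crux. Your handling of the first pass (matching forced minimum-constraint spending between the greedy output and $Q^*$, inducting on the depth of $\mathcal{T}_L$) is the same bookkeeping the paper carries out with its $Q^{\min}_{G_l}(\cdot)$ notation and is fine.
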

\textit{Proof of Claim~\ref{claim:greedy}:}

    Recall the rooted tree representation of the 1-laminar labelling given by $\mathcal{T}_L$. Recall that the minimum and maximum funding constraints on project type $l \in L$ are given by $B_l^{\min}$ and $B_l^{\max}$. However, since projects can have multiple type labels, the implied constraints can be more stringent depending on the structure of tree $\mathcal{T}_L$. Let $\tilde{B}_l^{\min}$ and $\tilde{B}_l^{\max}$ denote the minimum and maximum possible fund allocations to projects of type label $l$ in any outcome which satisfies the entire set of 1-laminar funding constraints.
    
    We can give a recursive computation for $\tilde{B}_l^{\min}$ and $\tilde{B}_l^{\max}$. For any type label $l$ which is a leaf of $\mathcal{T}_L$, $\tilde{B}_l^{\min}=B_l^{\min}$ and $\tilde{B}_l^{\max}=B_l^{\max}$. For any label $l$ which is not a leaf of $\mathcal{T}_L$, we have $\tilde{B}_l^{\min}=\max(\sum_{l' \in c_l}\tilde{B}_{l'}^{\min},{B}_l^{\min})$ and $\tilde{B}_l^{\max}=\min(\sum_{l' \in c_l}\tilde{B}_{l'}^{\max},{B}_l^{\max})$ where $c_l$ denotes the set of child nodes of node $l$. This follows from the 1-laminar structure. We assume that $B^{\min}$ and $B^{\max}$ are such that a feasible solution exists.

     Observe in the Greedy Algorithm \ref{alg:greedy-gf} that the total funds allotted to projects of type label $l$ is always between $\tilde{B}_l^{\min}$ and $\tilde{B}_l^{\max}$ by construction, and thus our algorithm returns a feasible allocation.

Let $\mathcal{P}_{G}^{\text{out}}(b)$ denote $Q_G(b-1) \setminus Q_G(b)$ for any set of groups $G.$
 $\mathcal{P}_{\{g\}}^{\text{out}}(b)$ being an empty set for all groups $g\in [r]$ and all $b \in [|z_g|]$ implies that  $\mathcal{P}_{G}^{\text{out}}(b)$ is also an empty set for any set of groups $G$ and any $b \in [\sum_{g \in G} |z_g|]$ due to the additive separability of utilities across groups.

Consider the first pass of the Greedy Algorithm~\ref{alg:greedy-gf}, and consider the stage when node $l$ is traversed. At this stage, the sub-tree rooted at node $l$ (except node $l$) satisfies all the minimum funding constraints. Further, at this stage, the amount of funds allocated to projects with label $m,$ where node $m$
is a child of node $l,$ are $\tilde{B}_m^{\min}$. Recall that per Assumption~\ref{ass:group_label}, all projects in a group have the same labels. Let $G_l$ denote the set of groups with label $l.$ Also, let $Q^{\min}_{G_l}(b)$ denote the size-$b$ social welfare maximizing project selection from $G_l$ satisfying all the minimum funding constraints of the sub-tree rooted at node $l$ (except node $l$). Similarly, define $\mathcal{P}^{\text{out},\min}_{G_l}(b)=Q^{\min}_{G_l}(b-1)\setminus Q^{\min}_{G_l}(b)$.

Observe that $Q^{\min}_{G_l}(b)$ is defined only for $b\geq \sum_{l' \in c_l}\tilde{B}_{l'}^{\min}$ and observe that $\mathcal{P}_{\{g\}}^{\text{out}}(b)$ being an empty set for all $g\in [r]$ and all $b \in [|z_g|]$ ensures that $\mathcal{P}^{\min}_{G_l}(b)$ is also an empty set. 
%
The proof now follows from induction over the levels of the rooted tree $\mathcal{T}_L.$
After traversing a leaf node $l$, the set of projects selected is $Q^{\min}_{G_l}(B_l^{\min}).$ 
Now, inducting over the levels of the tree, after node $l$ is traversed, the number of projects selected with type label $l$ is $\tilde{B}_{l}^{\min}$ and is $Q^{\min}_{G_l}(\tilde{B}_{l}^{\min})$.
Therefore, after the first pass of the Greedy Algorithm~\ref{alg:greedy-gf}, the outcome is social-welfare maximizing for size $\tilde{B}_{0}^{\min}$ (recall that the label of the root node of tree $\mathcal{T}_L$ is $0$).



In the second pass, at every stage, observe that the amount of funds allocated to projects of label $l$ is always bounded by $\tilde{B}_l^{\max}.$ 
Hence the 1-laminar funding constraints are always satisfied. By a similar argument as for the first pass, we obtain that a greedy selection of projects in the second pass is also optimal when $\mathcal{P}^{\text{out}}_{G_l}(b)$ is empty for all $b$ and all sets of groups of projects $G_l$.


This completes the proof of Claim~\ref{claim:greedy}.
    Now we go back to proving Lemma~\ref{lem:greedy} using Claim~\ref{claim:greedy}.


%
Claim~\ref{claim:greedy} implies that if the Greedy Algorithm~\ref{alg:greedy-gf} is \emph{not} optimal to solve FC-SWM-PB, then there exists some group $g \in [r]$ and an amount of funding $b \in [|z_g|]$ for which  $\mathcal{P}_{\{g\}}^{\text{out}}(b)$ is not empty. Let $g^*$ be such a group and let $b^* \in [|z_g|]$ be the \emph{smallest} funding amount for which  $\mathcal{P}_{\{g^*\}}^{\text{out}}(b^*)$ is not empty.

Consider the following exhaustive set of cases:\\
\textbf{Case 1.} $ f^i_{g^*} = 0.$ 

In this case, $Q_{\{g^*\}}(\cdot)$ is constructed as if the entire vote profile follows Definition~\ref{def:vote_profiles}, and from the arguments above, it is constructed greedily and therefore  $\mathcal{P}_{\{g^*\}}^{\text{out}}(b^*)$ is empty.\\
   \textbf{Case 2.} There exists a project $p_{\text{in}} \in \mathcal{P}_{\{g^*\}}^{\text{in}}(b^*)$ and a project $p_{\text{out}} \in \mathcal{P}_{\{g^*\}}^{\text{out}}(b^*)$ such that voter $i$ approves both $p_{\text{in}}$ and $p_{\text{out}}.$ That is, $p_{\text{in}}, p_{\text{out}} \in s^i_{g^*}.$

    Here, voter $i$'s utility function is indifferent between $p_{\text{in}}$ and $p_{\text{out}}$ for all possible outcomes. If $p_{\text{out}}$ was included in $Q_{\{g^*\}}(b^*-1),$ then it continues to be more favourable than $p_{\text{in}},$ either in marginal utility or in the tie-breaking order as per Definition~\ref{def:vote_profiles} on $V_{-i}$. Therefore  $\mathcal{P}_{\{g^*\}}^{\text{out}}(b^*)$ must be empty.\\
   \textbf{Case 3.} There exists a project $p_{\text{in}} \in \mathcal{P}_{\{g^*\}}^{\text{in}}(b^*)$ and a project $p_{\text{out}} \in \mathcal{P}_{\{g^*\}}^{\text{out}}(b^*)$ such that voter $i$ approves neither $p_{\text{in}}$ nor $p_{\text{out}}.$ That is, $p_{\text{in}}, p_{\text{out}} \notin s^i_{g^*}.$

    Same as the previous case, voter $i$'s utility function is indifferent between $p_{\text{in}}$ and $p_{\text{out}}$ for all possible outcomes, and for the same reason as above, $\mathcal{P}_{\{g^*\}}^{\text{out}}(b^*)$ must be empty.\\
   \textbf{Case 4.} There exists a project $p_{\text{in}} \in \mathcal{P}_{\{g^*\}}^{\text{in}}(b^*)$ and a project $p_{\text{out}} \in \mathcal{P}_{\{g^*\}}^{\text{out}}(b^*)$ such that voter $i$ approves $p_{\text{in}}$ but not $p_{\text{out}}.$ That is, $p_{\text{in}} \in  s^i_{g^*}$ and $p_{\text{out}} \notin s^i_{g^*}.$

    Consider the construction of $Q_{\{g^*\}}(b^*-1)$. The marginal social welfare of $p_{\text{out}}$ is at least equal to that of  $p_{\text{in}},$ and, in the case of equality, is higher in the tie-breaking order. For a group of substitutes, this means that either $p_{\text{out}}$ has a higher rank than $p_{\text{in}}$ or has an identical number of approvals but wins on tie-breaking. Since $p_{\text{out}}$ is not approved by voter $i,$ its marginal social welfare cannot become smaller than that of $p_{\text{in}}$ on adding more elements to the outcome. This is because the votes $V_{-i}$ follow Definition~\ref{def:vote_profiles}. For any outcome, it is preferable to replace $p_{\text{in}}$ by $p_{\text{out}}.$ This contradicts the definition of $Q_{\{g^*\}}(b^*)$.\\ 
  \textbf{Case 5.}  Voter $i$ approves all of  $\mathcal{P}_{\{g^*\}}^{\text{out}}(b^*)$ and none of $\mathcal{P}_{\{g^*\}}^{\text{in}}(b^*).$ 


Recall that $b^*$ is the smallest amount of funding for which $\mathcal{P}_{\{g^*\}}^{\text{out}}(b^*)$ is non-empty.
Consider the smallest amount of funding $b^\prime$ for which a project from $\mathcal{P}_{\{g^*\}}^{\text{out}}(b^*)$ is included in $Q_{\{g^*\}}(b^\prime).$  Call this project $p^*_{\text{out}}.$

By definition of  $Q_{\{g^*\}}(b^\prime)$ as the social-welfare-maximizing bundle of size $b^\prime$ from group $\{g^*\},$ the marginal social welfare (or tie-breaking order) of $p^*_{\text{out}}$ is higher than any element of $\mathcal{P}_{\{g^*\}}^{\text{in}}(b^*).$  That is, $U(Q_{\{g^*\}}(b^\prime)) \geq U(Q_{\{g^*\}}(b^\prime) \setminus p^*_{\text{out}} \cup \rho)$ for all $\rho \in \mathcal{P}_{\{g^*\}}^{\text{in}}(b^*),$ and in the case of equality, $ p^*_{\text{out}}$ is preferred in tie-breaking.

Let $\mathcal{P}_{\text{i-app}}$ denote the set of projects approved by voter $i$ that are in $Q_{\{g^*\}}(b^*-1) \setminus Q_{\{g^*\}}(b^\prime)$. 

Further, let $\mathcal{P}_{\text{stay}}$  denote the subset of $\mathcal{P}_{\text{i-app}}$ that is in the outcome $Q_{\{g^*\}}(b^*).$ 


If $\mathcal{P}_{\text{stay}}$ is empty, then the marginal utility that $p^*_{\text{out}}$ gives to voter $i$ in $Q_{\{g\}}(b) \cup p^*_{\text{out}}$ is exactly as much as it does in $Q_{\{g\}}(b^\prime).$ That is,   $u_i(Q_{\{g^*\}}(b^*) \cup p^*_{\text{out}}) - u_i(Q_{\{g^*\}}(b^*)) = u_i(Q_{\{g^*\}}(b^\prime)) - u_i(Q_{\{g^*\}}(b^\prime)\setminus p^*_{\text{out}}).$ 

Since $ p^*_{\text{out}}$ was selected in $Q_{\{g^*\}}(b^\prime)$ over any element of  $\mathcal{P}_{\{g^*\}}^{\text{in}}(b^*), p^*_{\text{out}}$ must replace any $p_{\text{in}} \in \mathcal{P}_{\{g^*\}}^{\text{in}}(b^*)$ in the outcome $Q_{\{g^*\}}(b^*)$. This contradicts the optimality of $Q_{\{g^*\}}(b^*)$.

On the other hand, if $\mathcal{P}_{\text{stay}}$ is not empty, then it is possible that $p^*_{\text{out}}$ does not give marginal utility to voter $i$ in $Q_{\{g^*\}}(b^*) \cup p^*_{\text{out}}.$ That is, it is possible that  $u_i(Q_{\{g^*\}}(b^*) \cup p^*_{\text{out}}) - u_i(Q_{\{g^*\}}(b^*)) = 0$.    

However, $p^*_{\text{out}}$ must replace any element of $\mathcal{P}_{\text{stay}}$ since $p^*_{\text{out}}$ was preferred over it in $Q_{\{g^*\}}(b^\prime)$. This replacement does not change voter $i$'s utility but makes the other voters either strictly better off or is preferred in tie-breaking (as in $Q_{\{g^*\}}(b^\prime)$). This contradicts the optimality of $Q_{\{g^*\}}(b^*)$.
\end{proof}

We now use Lemma~\ref{lem:greedy} to study incentives of strategic voting for the vote profile of Definition~\ref{def:vote_profiles}. 

\begin{theorem}\label{thm:nash_equilibrium}
Under Assumptions,~\ref{ass:1lam},~\ref{ass:group_label}, and~\ref{ass:unitcost}, and when the true vote profile is per Definition~\ref{def:vote_profiles}, truthful voting is a  Nash equilibrium in PBPI with FC-SWM-PB.
\end{theorem}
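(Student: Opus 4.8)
The plan is to fix an arbitrary voter $i$ whose true vote satisfies Definition~\ref{def:vote_profiles}, hold all other voters at their truthful votes (which also satisfy Definition~\ref{def:vote_profiles}), and show that $i$ cannot strictly increase their true utility $u_i$ by submitting any alternative report $\hat{v}^i$. The enabling fact is Lemma~\ref{lem:greedy}: since at most one vote --- namely $i$'s possibly-deviating report --- fails Definition~\ref{def:vote_profiles}, the Greedy Algorithm~\ref{alg:greedy-gf} outputs an FC-SWM-PB outcome both for the truthful profile (call this $Q^*$) and for the deviated profile (call this $\hat{Q}$). This lets me reason about both outcomes as greedy-constructed, social-welfare-optimal bundles rather than about an abstract maximizer, which is what makes a project-by-project analysis tractable. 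Both outcomes lie in the same feasibility envelope, since the $\tilde{B}^{\min}_l, \tilde{B}^{\max}_l$ bookkeeping from the proof of Lemma~\ref{lem:greedy} confines every valid outcome, truthful or deviated, to the same 1-laminar constraints.

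Because there are no complementarity interactions here ($t^i_g = 0$), the utility separates additively over groups, $u_i(Q) = \sum_g \min(f_g^i, |Q \cap s_g^i|)$, and the two passes of the algorithm can be treated with the same marginal-welfare bookkeeping. I would first record two monotonicity facts that drive everything: (i) voter $i$'s true utility from group $g$ is capped at the true satiation $f_g^i$, so forcing more than $f_g^i$ of $i$'s approved group-$g$ projects into the outcome yields $i$ nothing; and (ii) in a substitute group with its strict total order, the truly top-ranked projects carry the largest approval counts among $V_{-i}$ (every voter approving a project also approves all higher-ranked ones), so the greedy already funds $i$'s most-supported approved projects first. Together these say that truthful reporting already secures for $i$ the greedy-maximal number of truly-approved funded projects in each group, capped at $f_g^i$.

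The substance is a per-deviation exchange analysis that mirrors the case split in the proof of Lemma~\ref{lem:greedy}. Writing $U = U_{-i} + u_i$, optimality of $Q^*$ for the true profile gives $u_i(Q^*) - u_i(\hat{Q}) \ge U_{-i}(\hat{Q}) - U_{-i}(Q^*)$, but this inequality alone does not finish the argument since the sign of the right-hand side is not immediate; the real work is a direct structural bound on $u_i(\hat{Q})$. I would show, group by group and using the greedy marginal-welfare ordering together with facts (i)--(ii), that any report strictly raising $i$'s true utility must have displaced, at some greedy step, a project of strictly higher marginal welfare for $V_{-i}$ by one of lower marginal welfare, contradicting the greedy/optimality characterization of $\hat{Q}$ that Lemma~\ref{lem:greedy} certifies. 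The deviations to rule out are: under-reporting $f_g^i$ (earlier satiation, weakly fewer of $i$'s projects funded); over-reporting $f_g^i$ or adding spurious approvals (helps only projects $i$ does not truly value, which cannot raise $i$'s capped true utility); and withdrawing a truthful approval (weakly lowers the count of $i$'s funded projects).

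The main obstacle is the coupling induced by the global budget $\sum_g \hat{f}_g^i \le B$ and by the shared pool of project slots across groups: a manipulation that shifts funds or approvals from one group to another can change which groups win scarce slots, so the per-group arguments cannot be carried out in full isolation. I expect to resolve this exactly as cap (i) suggests --- any slot shifted into a group beyond $f_g^i$ is wasted for $i$, and any slot shifted out of a group where $i$ was still gaining utility must, by the greedy ordering on $V_{-i}$ that Lemma~\ref{lem:greedy} guarantees, have been taken by a project weakly preferred by the rest of the electorate, so $i$'s net true utility cannot rise. Making this slot-accounting precise across both passes and under the 1-laminar funding constraints is where the delicate part of the argument lies.
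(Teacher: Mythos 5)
Your overall strategy is the paper's: fix voter $i$, hold $V_{-i}$ truthful, invoke Lemma~\ref{lem:greedy} so that both the truthful and the deviated outcomes are greedy-constructed optima, exploit the cap at $f_g^i$ and the strict total order within subgroups, and rule out deviations by an exchange argument. Your taxonomy of deviations (under-reporting $f_g^i$, over-reporting or adding spurious approvals, withdrawing approvals) matches the paper's case split for independent and substitute groups, and your observation that the welfare-optimality inequality $u_i(Q^*) - u_i(\hat{Q}) \ge U_{-i}(\hat{Q}) - U_{-i}(Q^*)$ does not close the argument is correct.

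The gap is exactly where you place it, and it is not a formality: you compare the fully truthful outcome $Q^*$ against the outcome $\hat{Q}$ of an arbitrary deviation that may simultaneously alter the reports in many groups, and you concede that the resulting cross-group slot accounting is ``where the delicate part of the argument lies.'' The paper avoids ever having to do that global accounting by a specific decomposition device that your proposal lacks. First, it \emph{drops} the budget constraint $\sum_g f_g^i \le B$ on voter $i$'s report, enlarging the strategy space so that the group reports become independent coordinates. Second, it proves a conditional per-group statement: for \emph{any} fixed configuration $\{(f^i_\rho, s^i_\rho)\}_{\rho \ne g}$ of $i$'s reports on the other groups (truthful or not), reporting $(\hat f_g^i, \hat s_g^i)$ truthfully on group $g$ is weakly dominant. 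Because this holds coordinate-by-coordinate for every configuration of the remaining coordinates, the all-truthful report weakly dominates every report in the enlarged space, hence in the original one. This is what licenses the purely local analysis you want to do; without it, a deviation that shifts funds from group $g$ to group $g'$ cannot be decomposed into two separately analyzable moves. The quantitative content of the local step is also sharper than your sketch: in the substitute case the paper shows that if the deviation reduces the funding of $i$'s subgroup from $b^*$ to $b$, then $i$ loses at least $b^* - b$ true utility inside the group (only projects whose approval $i$ withdrew can leave $Q$, and the displaced ones were all utility-bearing for $i$ up to the cap), while Lemma~\ref{lem:greedy} caps the gain outside the group at $b^* - b$ (one project enters for each that leaves). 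Making your ``slot shifted out \dots taken by a project weakly preferred by the rest'' claim rigorous requires precisely this pairing of a per-unit lower bound on the loss with a per-unit upper bound on the gain, and that pairing only becomes available after the per-group conditional-dominance reduction.
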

The proof is technical and is given in Appendix~\ref{sec:proof_ne_laminar}. It crucially uses Lemma~\ref{lem:greedy}. For a voter $i$ deviating from their truthful vote, we break down their vote into that for each group $g \in [r]$ and then show that for any state of their vote in the other groups, truthful voting in a group $g$ is a weakly dominant strategy as long as the other voters' votes are per Definition~\ref{def:vote_profiles}. 

\subsubsection{Implications of the results.}
Lemma~\ref{lem:greedy} and Theorem~\ref{thm:nash_equilibrium} signify that for the class of vote profiles given in Definition~\ref{def:vote_profiles}, PBPI satisfies two key desiderata of voting mechanisms -- efficient computation of outcome (via the Greedy Algorithm~\ref{alg:greedy-gf}) and an incentive for voters to vote truthfully, thereby making voting simpler. 

Theorem~\ref{thm:nash_equilibrium} has important consequences for PB. First, under the setup of this subsection, the funding constraints do not present additional complexities for voting. Therefore PB organizers and policymakers need not worry about the cognitive complexity of the mechanism when deciding whether they must impose 1-laminar funding constraints on the PB outcome. Second, since the 1-laminar funding constraints have no role in deciding the voting strategy, it can be justified to release this information as a separate document and unclutter the actual ballot. It may be possible to run the PB election even when the parameters of the funding constraints are yet to be decided by a process which doesn't depend on the PB election. 

\section{Computational Complexity of Preference Aggregation in PBPI} \label{sec:complexity}

%
Since our mechanism for PB can model relatively complex preferences for all voters, the amount of information to be processed by the preference aggregation algorithm can be substantial.
As for all voting schemes, it is important to study the computational complexity of aggregating the votes for any objective function of the social planner. 
We first make a negative observation. Due to the complex project interactions in our model, SWM-PB cannot be solved in polynomial time unless $P=NP$. 
\begin{observation} \label{thm:nphard}
\textsc{SWM-PB} is \emph{NP-hard.}
\end{observation}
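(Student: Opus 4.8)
The plan is to prove NP-hardness of \textsc{SWM-PB} by a reduction from a known NP-hard problem, and the project interactions in our utility function (Definition~\ref{def:utility}) give us exactly the nonlinearity needed to encode a hard combinatorial constraint. The most natural source problem is one that exploits the complementarity term, since a bundle of complementary projects contributes utility in an all-or-nothing fashion ($t_g^i \cdot \mathbb{I}(s_g^i \subseteq Q)$), which behaves like a covering or exact-cover constraint. I would reduce from a classical NP-complete problem such as \textsc{Set Cover}, \textsc{Exact Cover by 3-Sets}, or \textsc{Independent Set}; I expect \textsc{Set Cover} (or \textsc{Knapsack}-style) to be the cleanest choice because the budget constraint $\sum_{j \in Q} c_j \leq B$ already supplies a capacity bound, so matching it against a cardinality/weight target in the source instance is direct.

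First I would fix the source instance and describe the gadget: each element of the ground set (or each clause/constraint) becomes a group $z_g$ on the ballot, and each candidate set becomes a collection of projects, with project costs $c_j$ chosen so that the budget $B$ forces a selection of a prescribed size. I would then construct voters whose votes $(f^i, s^i, t^i)$ place the relevant projects into complementary groups with $t_g^i = 1$, so that a voter derives the bundle's utility $\min(f_g^i, \sum_{j \in Q \cap s_g^i} c_j)$ only when the entire approved bundle $s_g^i$ is funded. The second step is to argue the equivalence: I would show that a PB outcome $Q$ achieves social welfare at least some threshold $W^\ast$ if and only if the source instance has a solution (e.g.\ a cover of the required size). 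The forward direction maps a valid cover to a feasible bundle and computes its welfare; the reverse direction argues that any welfare-$W^\ast$ outcome must, because of the all-or-nothing complementarity payoffs and the tight budget, correspond to a valid cover.

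The main obstacle I anticipate is the bookkeeping in the reverse direction — ensuring that the budget, the project costs, and the complementarity thresholds $f_g^i$ are calibrated so that no ``partial'' or ``mixed'' outcome can spuriously reach the welfare target $W^\ast$ by accumulating small contributions from incompletely funded bundles. Concretely, I would need the incremental utility from funding a strict subset of any complementary bundle to be zero (which the indicator $\mathbb{I}(s_g^i \subseteq Q)$ guarantees), and I would need to rule out the mechanism ``cheating'' by funding cheap independent projects to reach the threshold; this is handled by making the complementary bundles the only source of significant welfare and choosing $B$ to just barely accommodate a valid solution. Finally, I would verify the reduction is polynomial-time, which is immediate since the number of groups, projects, and voters is linear in the source instance size. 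I note that NP-hardness is unsurprising given that even additively separable PB subsumes \textsc{Knapsack}; the point here is that the interaction structure alone suffices, so the hardness is robust and motivates the FPT results of Theorems~\ref{thm:fpt} and~\ref{thm:fpt-n}.
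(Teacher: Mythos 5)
Your high-level plan (reduce from a covering-type problem, exploit the nonlinearity of the utility function) is sound, but the specific mechanism you lean on has a genuine mismatch. You propose to encode the covering structure through the complementarity term $t_g^i \cdot \mathbb{I}(s_g^i \subseteq Q)$, describing it as behaving ``like a covering or exact-cover constraint.'' It does not: that indicator is an \emph{all-of} (AND) condition --- the voter is paid only if every project in $s_g^i$ is funded --- whereas a covering constraint is an \emph{at-least-one-of} (OR) condition --- element $e$ is covered if some chosen set contains it. If you make each element a voter/group who approves all sets containing it and marks them complementary, that voter is satisfied only when \emph{all} such sets are chosen, which is not Set Cover. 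Your gadget of ``each element becomes a group, each candidate set becomes a collection of projects'' also runs into the fact that groups partition the projects, so a set appearing in several elements' groups would need multiple project copies plus a consistency gadget forcing all copies in or out together; none of this bookkeeping is worked out, and it is exactly where the reduction would break. To salvage a complementarity-based reduction you would want a source problem whose structure really is an AND over chosen items (e.g.\ Densest $k$-Subgraph, with one voter per edge approving its two endpoints as complements), not a covering problem.

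The paper's proof avoids all of this by using the \emph{substitution-with-satiation} term instead, which natively provides the OR semantics you need: with unit costs, a single non-contradictory group, $t^i_1=0$ and $f^i_1=1$ for every voter, voter $i$'s utility is $\min\bigl(1, |Q\cap s^i_1|\bigr)$, i.e.\ exactly the indicator that at least one approved project is funded. Taking one voter per element of the universal set, one project per candidate set, and budget $B=k$, the social welfare of an outcome $Q$ equals the number of elements covered by the $k$ chosen sets, so SWM-PB solves Maximum Set Coverage directly --- no threshold argument, no calibration of costs, and no complementarity interactions at all. This also makes a sharper point than your closing remark: under unit costs, additively separable PB is solvable greedily, so it is the saturation interaction alone (not complementarity, and not Knapsack-style costs) that already yields NP-hardness.
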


A proof is in Appendix \S\ref{sec:proof_np} and is via a reduction from the maximum set coverage problem. 
This result is unsurprising since many models with project interactions in PB and MW elections face this issue \cite{jain2020,jain2020participatory,rey2020designing,izsak2017working,izsak2018committee}. 
 FC-SWM-PB is also NP-hard since it is at least as hard as SWM-PB.

For real-world voting problems, we often deal with scenarios where some problem parameters are small. 
On a positive note, we show that FC-SWM-PB can be solved in polynomial time for reasonable real-world parameters of the model. 

We first consider the case where the number of projects that must be grouped together for project interactions is small. This is expected to be the case for most real-world PB elections. We also need the technical condition that the number of bits required to specify the budget $B$ is a fixed parameter. This can naturally be true for real-world ballots where the required precision of costs and funds is not very high. For example, in a case where the total budget is $10^6$ currency units, all costs and votes can be reasonably specified in units of $10^3$ currency units. In this example, we will have $B = 1000.$

\subsection{FPT with respect to the maximum size of a group of projects and $\log(B)$}


Let $s_{\max}$ denote the maximum size of a group of projects. We will show that if $s_{\max}$ is fixed, SWM-PB and FC-SWM-PB are computationally tractable. This suggests that PB organizers must design the ballot with reasonably small groups if computational complexity is a concern.  


\begin{theorem} \label{thm:fpt}
Under Assumption~\ref{ass:group_label}, FC-SWM-PB is FPT with respect to $(\log(B), s_{\max})$.
\end{theorem}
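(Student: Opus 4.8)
The plan is to exploit the additive separability of social welfare across groups together with a knapsack-style dynamic program over the tree $\mathcal{T}_L$ of Observation~\ref{obs:tree}, and then observe that treating $\log(B)$ as a parameter bounds $B$ (and hence every knapsack table) by a function of the parameter. Since utilities of projects in distinct groups never interact (Definition~\ref{def:utility}), the contribution of a group $g$ to the social welfare depends only on the selected subset $S = Q \cap z_g$. As $|z_g| \le s_{\max}$, there are at most $2^{s_{\max}}$ candidate subsets per group (and only $s_{\max}+1$ of them for a contradictory group). For each such $S$ I would precompute its cost $\mathrm{cost}(S) = \sum_{j \in S} c_j$ and its group social welfare $\mathrm{SW}_g(S) = \sum_{i \in [n]} u_i^g(S)$, where $u_i^g$ is the $g$-th summand in Definition~\ref{def:utility} (which is well defined on $S$ alone, since $s_g^i \subseteq z_g$); each of these takes $O(n\, s_{\max})$ time. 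By Assumption~\ref{ass:group_label} every project in a group shares the same labels, so each group is a leaf of $\mathcal{T}_L$ and the funding constraint for a label $l$ is exactly a constraint on the total cost spent in the subtree rooted at $l$.

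Next I would set up the dynamic program over $\mathcal{T}_L$ processed from the leaves up. For a node $l$, define $W_l(b)$ to be the maximum social welfare obtainable from the projects in the subtree rooted at $l$ subject to (i) the total cost of selected projects of type $l$ being exactly $b$, and (ii) all funding constraints attached to labels inside this subtree being satisfied; set $W_l(b) = -\infty$ if infeasible. For a leaf $l$ corresponding to group $g$, $W_l(b) = \max\{\mathrm{SW}_g(S) : S \subseteq z_g,\ \mathrm{cost}(S) = b\}$ whenever $B^{\min}_l \le b \le B^{\max}_l$ and $-\infty$ otherwise. For an internal node $l$ with children $c_1,\dots,c_k$, I would combine the children's tables by the usual $(\max,+)$-convolution, building partial sums $W_l^{(j)}(b) = \max_{b'+b'' = b} \big( W_l^{(j-1)}(b') + W_{c_j}(b'') \big)$, and finally restrict to $B^{\min}_l \le b \le B^{\max}_l$. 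The 1-laminar structure guarantees that the cost spent on $P_l$ is precisely the sum of the costs spent in the child subtrees, so this convolution exactly encodes constraint~\eqref{eq:gf}; the answer to FC-SWM-PB is $\max_{0 \le b \le B} W_{\mathrm{root}}(b)$.

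For the running time, the leaf work is $O\big(r\, 2^{s_{\max}} n\, s_{\max}\big)$ and produces $r$ tables of size $B+1$; the internal combination performs one $(\max,+)$-convolution per tree edge, each costing $O(B^2)$, for a total of $O(|L|\, B^2)$. All table indices range over $\{0,\dots,B\}$, and $B \le 2^{\log(B)}$, so $B$ and $B^2$ are bounded by functions of the parameter $\log(B)$ while $2^{s_{\max}}$ is bounded by a function of the parameter $s_{\max}$. Every remaining factor ($r$, $|L|$, $n$, $m$) is polynomial in the input size. Hence the total runtime is of the form $f(\log(B), s_{\max}) \cdot \mathrm{poly}(n, m, |L|)$, establishing fixed-parameter tractability.

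The main obstacle is that the combination step is inherently pseudo-polynomial: the number of distinct spending levels that must be tracked, and hence the convolution cost, scales with $B$ rather than with $\log(B)$, so without the second parameter the algorithm would only be pseudo-polynomial. Parametrizing by $\log(B)$ is exactly what collapses this dependence into the allowed function $f$. A secondary point requiring care is verifying that the per-node restriction $B^{\min}_l \le b \le B^{\max}_l$, applied bottom-up, enforces the full set of (possibly more stringent, nested) laminar constraints; this is where the 1-laminar hypothesis is essential, and it can be argued exactly as in the recursion for the effective bounds $\tilde{B}^{\min}_l, \tilde{B}^{\max}_l$ used in the proof of Claim~\ref{claim:greedy}.
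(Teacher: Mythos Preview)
Your proposal is correct and follows essentially the same approach as the paper: brute-force enumeration of the at most $2^{s_{\max}}$ subsets within each group, followed by a knapsack-style $(\max,+)$ dynamic program with tables of size $O(B)$, enforcing the funding constraints by setting infeasible entries to $-\infty$. The only cosmetic difference is that you organize the DP bottom-up along the laminar tree $\mathcal{T}_L$, whereas the paper phrases the combination step as a balanced divide-and-conquer recursion over the list of groups $[r]$ and then notes that this recursion can be modified to check each label's constraint at the point where the corresponding collection of groups is merged; your tree-based formulation makes that modification explicit.
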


\begin{proof}
We first give the result for SWM-PB, that is, for the special case where there are no funding constraints (except the overall budget constraint). We then extend it to FC-SWM-PB.

For intuition, observe that for any possible fund allocation to group $g,$ we can compute the social-welfare maximizing or \emph{best} subset of $z_g$ in time $O(n 2^{|z_g|})$ by doing a brute-force search over all subsets of $z_g.$ 

Let $\mathcal{U}_{G}(b)$ denote the maximum social welfare obtained on allocating funds $b$ to groups in set $G$ and the corresponding optimal set of projects by $Q_{G}(b)$.
The solution of SWM-PB is $Q_{[r]}(B)$ and its social welfare is $\mathcal{U}_{[r]}(B)$ in this notation. 
Since social welfare is additively separable across project groups, SWM-PB can be solved by the following recursion denoted by \underline{$\mathcal{R}$-FPT.}
\begin{align*} \label{eq:dp}
     \mathcal{U}_{[r]}(B)  &= \max_{ b \in [B]} ~  \mathcal{U}_{[\left \lfloor{r/2}\right \rfloor ]} (b) +  \mathcal{U}_{[r]\backslash [\left \lfloor{r/2}\right \rfloor ]} (B-b)  .\\
     b^\prime &= \argmax_{ b \in [B]} ~  \mathcal{U}_{[\left \lfloor{r/2}\right \rfloor ]} (b) +  \mathcal{U}_{[r]\backslash [\left \lfloor{r/2}\right \rfloor ]} (B-b) . \\
     Q_{[r]}(B)  &=  ~  Q_{[\left \lfloor{r/2}\right \rfloor ]} (b^\prime) \cup  Q_{[r]\backslash [\left \lfloor{r/2}\right \rfloor ]} (B-b^\prime).
\end{align*} 
We compute the base cases $\mathcal{U}_{\{g\}}(b)$ and $Q_{\{g\}}(b)$ for all $g\in [r]$ and $b \in [B]$ and store it in a table.  
Each table entry is computed in time $O(n 2^{|z_g|})$ for group $g$.  The overall computational complexity of the recursion is therefore $O(nrB \cdot 2^{s_{\max}}).$

For extending the result to FC-SWM-PB observe that, under Assumption~\ref{ass:group_label},   all projects in a group have identical labels.
 We can  modify the recursion $\mathcal{R}$-FPT to integrate the funding constraints by imposing them on collections of groups $z_g$ that together make up a set $P_l$ for any type label $l \in L$. This can be done by setting the social utility of the terms of the recursion that violate any funding constraint to $-\infty.$ 
\end{proof}

Note that we do not need the assumption on the 1-laminar structure of the funding constraints (Assumption~\ref{ass:1lam}) for this result to hold. It will hold for any arbitrary set of type labels on projects and associated minimum and maximum funding constraints as long as Assumption~\ref{ass:group_label} holds.

\subsection{FPT with respect to the number of distinct votes}

In this subsection we deviate from the general framework of PB and study the case where the number of distinct votes is small.
Often budgetary tasks are undertaken by a small number of elected representatives and each representative may be voting on behalf of a different number of voters.
This framework may also be relevant in the realm of delegation voting \cite{green2015direct}. 
Another use case would be a PB format where (an unrestricted number of) voters are asked to choose one out of a fixed set of `prototype outcomes'. Yet another use case would be when a small number of voters are queried at random to get a `quick pulse' of the people's opinions.

For simplicity, we will overload the notation and use $n$ for the number of distinct voters, and $w_i$ for the frequency or weight of the vote of voter $i$ for each $i \in [n].$ In this notation, the social utility of an outcome $Q$ is given by $\sum_{i \in [n]} w_iu_i(Q),$ which is the objective function of FC-SWM-PB here.
\begin{theorem} \label{thm:fpt-n}
Under Assumptions~\ref{ass:group_label}, and ~\ref{ass:unitcost}, FC-SWM-PB is FPT with respect to $n.$\footnote{Here B is bounded by $m$ since projects are unit cost and therefore we don't need it as a fixed parameter.}
\end{theorem}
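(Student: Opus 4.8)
The plan is to reuse the divide-and-conquer recursion of Theorem~\ref{thm:fpt} essentially verbatim, changing only how the per-group base cases are computed. Recall that the recursion $\mathcal{R}$-FPT (and its tree-DP variant over $\mathcal{T}_L$ from Observation~\ref{obs:tree} that enforces the $1$-laminar constraints) reduces \textsc{FC-SWM-PB} to knowing, for every group $g$ and every budget level $b$, the quantity $\mathcal{U}_{\{g\}}(b)$, the maximum social welfare obtainable by putting exactly $b$ unit-cost projects of group $g$ into the outcome, together with a witnessing set $Q_{\{g\}}(b)$. Because projects are unit-cost (Assumption~\ref{ass:unitcost}), every budget quantity lies in $\{0,1,\dots,m\}$ and in particular $B\le m$, so all DP tables are indexed by $O(m)$ budget values and the combine step (convolving children and setting to $-\infty$ any partial sum that violates a $B^{\min}_l$ or $B^{\max}_l$ constraint, exactly as in Theorem~\ref{thm:fpt}) runs in time polynomial in $m$ and $r$. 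Thus the whole theorem reduces to computing each $\mathcal{U}_{\{g\}}(b)$ in time $f(n)\cdot\mathrm{poly}(m)$.

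For a single group $g$, the only thing that distinguishes two of its projects, as far as Definition~\ref{def:utility} is concerned, is the set of voters who approve it; since there are only $n$ distinct votes there are at most $2^n$ such approval signatures $\sigma\subseteq[n]$. Let $m_\sigma$ be the number of projects of group $g$ with signature $\sigma$, and let $x_\sigma\in\{0,\dots,m_\sigma\}$ be how many of them we fund; then voter $i$'s funded-approval count is $n_i=\sum_{\sigma\ni i}x_\sigma$. I would phrase the per-group problem as an integer program in the $\le 2^n$ variables $x_\sigma$, subject to $\sum_\sigma x_\sigma=b$ and the box constraints. The objective is the group's contribution to social welfare, which I linearise voter by voter: for a voter with $t^i_g=0$ I add an auxiliary variable $u_i$ with $u_i\le f^i_g$ and $u_i\le n_i$ and reward $w_i u_i$ (at the optimum $u_i=\min(f^i_g,n_i)$ since $u_i$ is maximised); for a voter with $t^i_g=1$ the contribution is the constant $R_i=\min(f^i_g,|s^i_g|)$ earned exactly when every approved project is funded, which I model with a binary $y_i$, the constraints $x_\sigma\ge m_\sigma\,y_i$ for all $\sigma\ni i$, and reward $R_i y_i$. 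This is a linear integer program whose number of variables, $\le 2^n+2n$, depends only on $n$; by Lenstra's algorithm for integer programming in fixed dimension it is solvable in time $f(n)\cdot\mathrm{poly}(\text{input bits})$. Solving it once for each $b\in\{0,\dots,|z_g|\}$ and reading off an integral optimiser as $Q_{\{g\}}(b)$ yields all the base cases within the required FPT budget.

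The main obstacle is precisely this per-group step: the two nonlinearities in Definition~\ref{def:utility} --- the saturation $\min(f^i_g,\cdot)$ and the all-or-nothing complementarity indicator --- must be encoded without blowing up the dimension, and the dimension must be bounded by a function of $n$ \emph{alone}, independent of $m$ and $r$. Grouping projects by their $\le 2^n$ signatures is what makes the dimension independent of $m$; crucially, I cannot fold all groups into one global program, since that would need up to $r\cdot 2^n$ variables and Lenstra's guarantee would no longer be FPT in $n$. This is why the decomposition is essential: Lenstra is applied only group-locally, and the inter-group budget sharing together with the $1$-laminar funding constraints are handled by the polynomial-size tree DP of Theorem~\ref{thm:fpt}, which is legitimate here because unit costs cap every budget index at $m$. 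A secondary point to verify is that the program's optimum is attained at an integer point giving a genuine project selection (immediate, since $x_\sigma$ and $y_i$ are declared integral and Lenstra returns an integral optimiser) and that the linearisation is exact in the maximisation direction (it is, because $u_i$ and $y_i$ enter the objective with nonnegative coefficients, so the optimiser sets $u_i=\min(f^i_g,n_i)$ and $y_i=1$ whenever profitable and feasible).
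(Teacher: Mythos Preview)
Your proposal is correct and shares the paper's high-level skeleton: the outer recursion $\mathcal{R}$-FPT (with $-\infty$ for infeasible partial allocations) handles the laminar funding constraints, and the per-group base cases are solved by compressing projects into at most $2^n$ approval-signature classes. Where you diverge is in how the per-group optimisation is actually carried out. The paper splits voters by their complementarity bit, externally enumerates the $2^{n_c}$ subsets of complementarity voters to be fully satisfied (Phase~1), and then in Phase~2 invokes the concave-MIP machinery of Bredereck et al.\ \cite{bredereck2020mixed} over $2^{n_s}$ signature variables to maximise the remaining saturated-substitution welfare. You instead encode both nonlinearities inside a single \emph{linear} ILP: the saturation $\min(f^i_g,n_i)$ via an auxiliary $u_i$ bounded above by both quantities, and the all-or-nothing complementarity via a binary $y_i$ with big-$M$-style constraints $x_\sigma\ge m_\sigma y_i$; then plain Lenstra in dimension $O(2^n)$ suffices. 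Both routes give a doubly-exponential dependence on $n$ (Lenstra contributes $(2^n)^{O(2^n)}$, matching the paper's $2^{2^{O(n)}}$), but yours is more self-contained in that it avoids the specialised concave-separable MIP result and treats the two interaction types uniformly in one program; the paper's phased decomposition, on the other hand, makes the role of the complementarity indicator more transparent and dovetails with the subsequent discussion of approximations (submodular maximisation in Phase~2 for the general-cost extension).
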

\begin{proof}
The outer recursion is same as \underline{$\mathcal{R}$-FPT} given for Theorem~\ref{thm:fpt}. We set the social utility of the terms of the recursion that violate any funding constraint to $-\infty.$ 

Now we describe how we solve the base cases, that is, find $\mathcal{U}_{\{g\}}(b)$ and $Q_{\{g\}}(b)$ for all $g\in [r],$ $b \in [B].$
Without loss of generality, let voters $i \in [n_c]$ set $t_g^i =1$ for group $g$, i.e., express that the group is of complementary projects (we call them category C voters), and the voters $[n]\setminus [n_c]$ set $t_g^i = 0$ (we call them category S voters). Denote $n_s = n-n_c.$

We break the problem into $2^{n_c}$ `cases', each corresponding to a unique subset of category $C$ voters. For each case, we have 2 `phases'. For case $k \in [2^{n_c}]$, denote the corresponding subset of category C voters by $v_k.$

In phase 1,
all voters in $v_k$ are satisfied by allocating funds to the union of the projects they approve. If this is not possible with $b$ units of funds, set the utility of case $k$ to $-\infty.$ Let $b'$ funds be spent at this point. 

In phase 2, with the remaining $b-b'$ funds, we fund the projects that maximize the unsatisfied portion of the utility of the voters of category S. Denote the set of all subsets of voters of category S by $\mathcal{P}_S$. We divide the set of yet unfunded projects in group $g$ into $2^{n_s}$ parts, each corresponding to the projects approved by all voters corresponding to an element of $\mathcal{P}_S$. Denote this partition $\varrho = \{\rho_1, \rho_2, \ldots, \rho_{2^{n_s}}\}$. We then give a mixed integer program (MIP) with $2^{n_s}$ variables; variable $l \in [2^{n_s}]$ corresponding to the number of projects in $\rho_l$ funded in phase 2. See that the social utility objective function for category S voters over the projects funded in phase 2 is a function of these variables. By the result of Bredereck et al. \cite{bredereck2020mixed}, concave utility functions can also be incorporated in MIP, and the runtime is exponential in the number of variables.

The case with the highest social welfare is chosen as the outcome of $\mathcal{U}_{\{g\}}(b)$ and $Q_{\{g\}}(b)$. The solution is an outcome of FC-SWM-PB by construction.
The overall runtime complexity is $O(r m 2^{2^{O(n)}}).$
\end{proof}

\textbf{Discussion:} The runtime is doubly exponential in $n$, resulting from having project interaction in the objective. Due to this, the techniques of \cite{knop2020voting}, who gave singly exponential time algorithms for several combinatorial voting problems, do not directly apply to our setup. We can, however, use the approximation scheme of \cite{skowron2017fpt} to get an arbitrarily close approximation in singly exponential time (our model is `p-subseparable' per their terminology). For the case of general costs, we can get a $1-1/e$ approximation in phase 2 of our algorithm via the algorithm of \cite{sviridenko2004note} for submodular function maximization under knapsack constraints -- this results in an overall $1-1/e$ approximation for FC-SWM-PB in singly exponential time.

\section{Conclusions}
\label{sec:conclusion}
\vspace{-0.5cm}
While the use of \emph{categories} or \emph{groups} of projects in PB ballots is now standard in theory \cite{jain2020,fairstein2021proportional,jain2021partition}, experimental studies (e.g. the study on Amazon Mechanical Turk by Fairstein et al. \cite{fairstein2023participatory}), and practice (e.g. the 2020 Long Beach, USA PB election at \url{https://budget.pbstanford.org/longBeach2020}), there is no existing work on leveraging this partition to design a preference elicitation method which can enable voters to express a wide variety of project interactions with only a reasonable amount of cognitive effort\footnote{A line of work in PB studies the amount of cognitive effort  a voter must apply in a voting mechanism (see, for example, \cite{benade2018efficiency,fairstein2023participatory}). While there is a lack of consensus in the literature on the relative effort that PB mechanisms require from voters, and we have not evaluated PBPI empirically, we believe that our scheme needs only a comparable amount of effort from voters as existing methods. This is because each part of a vote in PBPI (fund allocation to groups, approval within groups, and complementarity questions) is simple and intuitive.}. We fill this gap by providing a mechanism with this property, which can also 
naturally integrate 1-laminar funding constraints without creating additional strategic or computational complexities, is deterministic\footnote{Being deterministic is often a desirable property for voting mechanisms. One of the reasons is the difficulty of verifying implementation
correctness in randomized schemes.}, is computationally tractable in reasonable parameter regimes (Theorem~\ref{thm:fpt}), and is also robust to unilateral strategic deviations for a class of vote profiles (Theorem~\ref{thm:nash_equilibrium}).
Therefore, our proposed design PBPI is a strong candidate for PB in the real world. 

Empirically studying the expressiveness and simplicity of our PB mechanism is important. This may be done similar to how Fairstein et al. \cite{fairstein2023participatory} study other common PB mechanisms. Including more complex project interactions in an intuitive voting scheme for PB continues to be an important avenue for future research. For our  positive results on strategic voting, we need to make assumptions on the cost of projects and drop the complementarity project interactions -- designing strategyproof mechanisms for PB without these assumptions would be a major contribution. Studying the properties of preference aggregation methods that maximize the Nash welfare \cite{nash1950bargaining} or characterizing the core of PB \cite{fain2016core,munagala2022auditing} under project interactions are also interesting research directions. 

\bibliographystyle{splncs04} 
\vspace{-0.5cm}
\bibliography{references}

\begin{thebibliography}{10}
\providecommand{\url}[1]{\texttt{#1}}
\providecommand{\urlprefix}{URL }
\providecommand{\doi}[1]{https://doi.org/#1}

\bibitem{arrow2010handbook}
Arrow, K.J., Sen, A., Suzumura, K.: Handbook of social choice and welfare,
  vol.~2. Elsevier (2010)

\bibitem{aziz2021participatory}
Aziz, H., Shah, N.: Participatory budgeting: Models and approaches. In:
  Pathways Between Social Science and Computational Social Science, pp.
  215--236. Springer (2021)

\bibitem{baumeister2021complexity}
Baumeister, D., Boes, L., Hillebrand, J.: Complexity of manipulative
  interference in participatory budgeting. In: International Conference on
  Algorithmic Decision Theory. pp. 424--439. Springer (2021)

\bibitem{bei2022candidate}
Bei, X., Liu, S., Poon, C.K., Wang, H.: Candidate selections with proportional
  fairness constraints. Autonomous Agents and Multi-Agent Systems  \textbf{36},
   1--32 (2022)

\bibitem{benade2018efficiency}
Benade, G., Itzhak, N., Shah, N., Procaccia, A.D., Gal, Y.: Efficiency and
  usability of participatory budgeting methods (2018)

\bibitem{benade2021preference}
Benade, G., Nath, S., Procaccia, A.D., Shah, N.: Preference elicitation for
  participatory budgeting. Management Science  \textbf{67}(5),  2813--2827
  (2021)

\bibitem{brams1978approval}
Brams, S.J., Fishburn, P.C.: Approval voting. American Political Science Review
   \textbf{72}(3),  831--847 (1978)

\bibitem{brams2002voting}
Brams, S.J., Fishburn, P.C.: Voting procedures. Handbook of social choice and
  welfare  \textbf{1},  173--236 (2002)

\bibitem{bredereck2018multiwinner}
Bredereck, R., Faliszewski, P., Igarashi, A., Lackner, M., Skowron, P.:
  Multiwinner elections with diversity constraints. In: Proceedings of the AAAI
  Conference on Artificial Intelligence. vol.~32 (2018)

\bibitem{bredereck2020mixed}
Bredereck, R., Faliszewski, P., Niedermeier, R., Skowron, P., Talmon, N.: Mixed
  integer programming with convex/concave constraints: Fixed-parameter
  tractability and applications to multicovering and voting. Theoretical
  Computer Science  \textbf{814},  86--105 (2020)

\bibitem{cabannes2004participatory}
Cabannes, Y.: Participatory budgeting: a significant contribution to
  participatory democracy. Environment and urbanization  \textbf{16}(1),
  27--46 (2004)

\bibitem{celis2017multiwinner}
Celis, L.E., Huang, L., Vishnoi, N.K.: Multiwinner voting with fairness
  constraints. arXiv preprint arXiv:1710.10057  (2017)

\bibitem{chen2022participatory}
Chen, J., Lackner, M., Maly, J.: Participatory budgeting with donations and
  diversity constraints. In: Proceedings of the AAAI Conference on Artificial
  Intelligence. vol.~36, pp. 9323--9330 (2022)

\bibitem{fain2016core}
Fain, B., Goel, A., Munagala, K.: The core of the participatory budgeting
  problem. In: International Conference on Web and Internet Economics. pp.
  384--399. Springer (2016)

\bibitem{fairstein2023participatory}
Fairstein, R., Benad{\`e}, G., Gal, K.: Participatory budgeting design for the
  real world. arXiv:2302.13316  (2023)

\bibitem{fairstein2021proportional}
Fairstein, R., Meir, R., Gal, K.: Proportional participatory budgeting with
  substitute projects. arXiv preprint arXiv:2106.05360  (2021)

\bibitem{austin2020}
Falcon, R.: Austin justice coalition survey tool finds majority of local
  participants want to cut {APD} budget by {USD} 236{M}  (2020),
  \url{https://www.kxan.com/news/local/austin/austin-justice-coalition-survey-tool-finds-majority-of-local-participants-want-to-cut-apd-budget-by-236m/}

\bibitem{feige2013welfare}
Feige, U., Izsak, R.: Welfare maximization and the supermodular degree. In:
  Proceedings of the 4th conference on Innovations in Theoretical Computer
  Science. pp. 247--256 (2013)

\bibitem{freeman2021truthful}
Freeman, R., Pennock, D.M., Peters, D., Vaughan, J.W.: Truthful aggregation of
  budget proposals. Journal of Economic Theory  \textbf{193},  105234 (2021)

\bibitem{fudenberg1991game}
Fudenberg, D., Tirole, J.: Game theory. MIT press (1991)

\bibitem{ganuza2012}
Ganuza, E., Baiocchi, G.: The power of ambiguity: How participatory budgeting
  travels the globe. Journal of Public Deliberation  \textbf{8} (2012)

\bibitem{garg2018substitution}
Garg, J., Mehta, R., Vazirani, V.V.: Substitution with satiation: A new class
  of utility functions and a complementary pivot algorithm. Mathematics of
  Operations Research  \textbf{43}(3),  996--1024 (2018)

\bibitem{gatto2021governance}
Gatto, A., Sadik-Zada, E.R.: Governance matters. fieldwork on participatory
  budgeting, voting, and development from {C}ampania, {I}taly. Journal of
  Public Affairs p. e2769 (2021)

\bibitem{gelauff2020advertising}
Gelauff, L., Goel, A., Munagala, K., Yandamuri, S.: Advertising for
  demographically fair outcomes. arXiv:2006.03983  (2020)

\bibitem{gibbard1973manipulation}
Gibbard, A.: Manipulation of voting schemes: a general result. Econometrica pp.
  587--601 (1973)

\bibitem{goel2019knapsack}
Goel, A., Krishnaswamy, A.K., Sakshuwong, S., Aitamurto, T.: Knapsack voting
  for participatory budgeting. ACM Transactions on Economics and Computation
  \textbf{7}(2) (2019)

\bibitem{green2015direct}
Green-Armytage, J.: Direct voting and proxy voting. Constitutional Political
  Economy  \textbf{26},  190--220 (2015)

\bibitem{izsak2017working}
Izsak, R.: Working together: Committee selection and the supermodular degree.
  In: International Conference on Autonomous Agents and Multiagent Systems. pp.
  103--115. Springer (2017)

\bibitem{izsak2018committee}
Izsak, R., Talmon, N., Woeginger, G.: Committee selection with intraclass and
  interclass synergies. In: Proceedings of the AAAI Conference on Artificial
  Intelligence. vol.~32 (2018)

\bibitem{jain2020}
Jain, P., Sornat, K., Talmon, N.: Participatory budgeting with project
  interactions. In: Proceedings of the Twenty-Ninth International Joint
  Conference on Artificial Intelligence, {IJCAI-20}. pp. 386--392 (2020)

\bibitem{jain2020participatory}
Jain, P., Sornat, K., Talmon, N., Zehavi, M.: Participatory budgeting with
  project groups. arXiv:2012.05213  (2020)

\bibitem{jain2021partition}
Jain, P., Talmon, N., Bulteau, L.: Partition aggregation for participatory
  budgeting. In: Proceedings of the 20th International Conference on Autonomous
  Agents and MultiAgent Systems. pp. 665--673 (2021)

\bibitem{johnson2021testing}
Johnson, C., Carlson, H.J., Reynolds, S.: Testing the participation hypothesis:
  Evidence from participatory budgeting. Political Behavior pp. 1--30 (2021)

\bibitem{knop2020voting}
Knop, D., Kouteck{\`y}, M., Mnich, M.: Voting and bribing in single-exponential
  time. ACM Transactions on Economics and Computation (TEAC)  \textbf{8}(3),
  1--28 (2020)

\bibitem{list2009judgement}
List, C., Puppe, C., et~al.: Judgement aggregation: A survey. (2009)

\bibitem{munagala2022auditing}
Munagala, K., Shen, Y., Wang, K.: Auditing for core stability in participatory
  budgeting. In: Web and Internet Economics: (WINE). pp. 292--310. Springer
  (2022)

\bibitem{nash1950bargaining}
Nash~Jr, J.F.: The bargaining problem. Econometrica: Journal of the econometric
  society pp. 155--162 (1950)

\bibitem{nash1950equilibrium}
Nash~Jr, J.F.: Equilibrium points in n-person games. Proceedings of the
  national academy of sciences  \textbf{36}(1),  48--49 (1950)

\bibitem{patel2021group}
Patel, D., Khan, A., Louis, A.: Group fairness for knapsack problems. arXiv
  preprint arXiv:2006.07832  (2020)

\bibitem{peters2020proportionality}
Peters, D., Skowron, P.: Proportionality and the limits of welfarism. In:
  Proceedings of the 21st ACM Conference on Economics and Computation. pp.
  793--794 (2020)

\bibitem{rey2020designing}
Rey, S., Endriss, U., de~Haan, R.: Designing participatory budgeting mechanisms
  grounded in judgment aggregation. In: Proceedings of the International
  Conference on Principles of Knowledge Representation and Reasoning. vol.~17,
  pp. 692--702 (2020)

\bibitem{rey2023computational}
Rey, S., Maly, J.: The (computational) social choice take on indivisible
  participatory budgeting. arXiv:2303.00621  (2023)

\bibitem{satterthwaite1975strategy}
Satterthwaite, M.A.: Strategy-proofness and arrow's conditions: Existence and
  correspondence theorems for voting procedures and social welfare functions.
  Journal of economic theory  \textbf{10}(2),  187--217 (1975)

\bibitem{skowron2017fpt}
Skowron, P.: Fpt approximation schemes for maximizing submodular functions.
  Information and Computation  \textbf{257},  65--78 (2017)

\bibitem{sowell2004affirmative}
Sowell, T.: Affirmative action around the world: An empirical study. Yale
  University Press (2004)

\bibitem{sviridenko2004note}
Sviridenko, M.: A note on maximizing a submodular set function subject to a
  knapsack constraint. Operations Research Letters  \textbf{32}(1),  41--43
  (2004)

\bibitem{wainwright2003}
Wainwright, H.: Making a people's budget in {P}orto {A}legre. NACLA Report on
  the Americas  \textbf{36},  37--42 (03 2003)

\bibitem{wampler2007}
Wampler, B.: A guide to participatory budgeting. In: Participatory Budgeting.
  World Bank (2007)

\bibitem{williamson2011design}
Williamson, D.P., Shmoys, D.B.: The design of approximation algorithms.
  Cambridge university press (2011)

\bibitem{yang2018multiwinner}
Yang, Y., Wang, J.: Multiwinner voting with restricted admissible sets:
  Complexity and strategyproofness. In: IJCAI. pp. 576--582 (2018)

\end{thebibliography}

\newpage
\appendix
\section{Appendix}

\subsection{Proof of Observation \ref{thm:nphard}: \textsc{SWM-PB} is \emph{NP-hard.}} \label{sec:proof_np}

\begin{proof}
The result is obtained via a reduction from the maximum set coverage (MSC) problem, known to be NP-hard~\cite{williamson2011design}.  An instance of the MSC problem entails a \emph{universal set} $A$ with $|A|$ elements. There is a \emph{collection} $D$ of $|D|$ subsets of $A,$ denoted by $d_1,d_2,\ldots, d_{|D|}.$ The problem is to identify $k$ sets from $D$ whose union covers $A$ maximally.

We now construct an instance of PB. Project costs are $c_j =1~ \forall j \in [m]$. There is only one group, i.e., $r = 1.$ The group is non-contradictory and all voters $i \in [n]$ set $t^i_1 = 0$ i.e., there are no complementary projects. Given an instance of MSC, we have $n=|A|$ voters, and $m = |D|$ projects. The budget is $B$ units, where $B = k.$
All voters $i \in [n]$ set $f^i_1 =1.$ Voters $i\in [n]$ approve a subset $s_1^j$ of the projects, which we construct per the instance of MSC.



%
The universal set $A$ corresponds to the set of all voters.
Let $y_j$ be the set of voters who approve project $j$, i.e., $y_j = \{i|j \in s^i_1\}.$ 
For a given instance of MSC, we construct a corresponding instance of PB with $m = |D|$ projects, and each set $d_j \in D$ maps to $y_j$. This shows that finding an outcome of SWM-PB is as hard as solving the MSC problem.
%
%
\end{proof}


\subsection{Proof of Observation~\ref{obs:not_sp_not_complementary}}
\begin{proof}
 The example below shows a profitable strategic deviation only under substitution project interaction. 

  There are $n=7$ voters, $m = 10$ projects, and the budget is $B=2$. All projects are unit-cost. There are no 1-laminar funding constraints. Projects $\{p_1,p_2,p_3\}$ form the group $z_1$, and all other projects are in singleton groups. In the truthful votes, in group $z_1$, $p_1$ is approved by voters $1,2,$ and $3$; $p_2$ is approved by voters $4,5,$ and $6$; and $p_3$ is approved by voters $2,3,4,5,$ and $7$. All voters $i\in [7]$ set the fund allocation for group $1$ to $1$, i.e., $f_1^i = 1.$ All voters $i \in [7]$ also approve and allocate funds to project $p_{i+3}.$ (That is, set $f_g^i =1$ for the singleton group containing the project $p_{i+3}$ and also add it to the corresponding approval set $s_g^i$).  The tie-breaking order prefers projects with a lower sum of indices.

Under truthful voting, the bundle $\{p_1,p_2\}$ is funded in SWM-PB (preferred in tie-breaking with a social utility of $6$).
On the other hand, if voter $7$ approves $p_4$ instead of $p_{10},$ then the bundle $\{p_3,p_4\}$ is funded in SWM-PB with a reported social utility of $7$. This outcome increases voter $7$'s utility by $1$ unit. 
\end{proof}

\subsection{Example of Vote Profile In Definition~\ref{def:gf}} \label{sec:example_vote_profile}

\begin{example} \label{ex:vote_profile}
    There are $m=9$ projects, $n=4$ voters, and $r=2$ groups with non-contradictory projects. The groups are $z_1 = \{p_1,p_2,p_3\}$ and 
    $z_2 =\{p_4,p_5,\ldots,p_9\}.$ 
    
    In group $1,$ the approval sets are $ s_1^1 = \{p_1,p_2\}, s_1^2 = \{p_2,p_3\}, s_1^3 = \{p_2\},$  and $s_1^4 = \{p_3\}.$ The fund allocations for this group are $f^1_1=f^2_1=2,$ and $f^3_1=f^4_1=1$. See that the approval sets do not satisfy any strict total order over subgroups of projects; however, since $f_1^i = |s_1^i|$ for all voters $i \in [4],$ group $1$ is of independent projects as per the votes. Therefore, the votes satisfy Definition~\ref{def:vote_profiles} for this group.

    In group 2, the approval sets are $s_2^1 = \{p_4,p_5,p_6\}, s_2^2 = \{p_4,p_5\}, s_2^3 = \{p_7,p_8,p_9\},$ and $s_2^4 = \{p_8,p_9\}.$ Their fund allocations are $f^1_2=f^2_2=1, $ and $f^3_2=f^4_2=2$. Since $f_2^i \neq |s_2^i|$ for some voters $i \in [4],$ group $2$ does not satisfy option 1 of Definition~\ref{def:vote_profiles}, i.e., it is not of independent projects.
    
    However, it satisfies option 2 of Definition~\ref{def:vote_profiles} since  $z_2$  can be partitioned into two sets with $\lambda_{2,1}=\{p_4,p_5,p_6\}$ and $\lambda_{2,2}=\{p_7,p_8,p_9\}.$
     Voters 1 and 2 only approve  projects in $\lambda_{2,1}.$ Whereas voters 3 and 4 only approve  projects in  $\lambda_{2,2}$. A strict total order of projects followed by the voters in $\lambda_{2,1}$  and $\lambda_{2,2}$ is $(p_4 \succ	 p_5 \succ p_6)$ and $(p_9 \succ p_8 \succ	p_7)$ respectively. 
     
%
\end{example}

\subsection{Proof Of Theorem~\ref{thm:nash_equilibrium}} \label{sec:proof_ne_laminar}

\begin{proof}
Let $V_{-i}$ denote the vote profile of all voters other than voter $i.$ Let voter $i's$ truthful (unobservable) vote be $(\hat{f}^i, \hat{s}^i)$. This unobserved vote is per Definition~\ref{def:vote_profiles}.  Their actual or observed vote is $(f^i,s^i)$. This may deviate from Definition~\ref{def:vote_profiles} arbitrarily. 

Say we drop the constraint of $\sum_{g \in [r]} f_g^i \leq B$ for voter $i$; we show that even without the constraint, truthful voting is a weakly dominant strategy for voter $i$ if $V_{-i}$ is per Definition~\ref{def:vote_profiles}. Note that voter $i$'s truthful vote $(\hat{f}^i, \hat{s}^i)$ satisfies the budget constraint per model definition.
When $b$ units of funds are allotted to group $g,$ recall that $\mathcal{U}_{\{g\}}(b)$ denotes the reported (i.e., observed, but not necessarily ``true'') social welfare   
of an optimal bundle of size $b$ from group $g,$ and $Q_{\{g\}}(b)$ denotes the said bundle.

For any group of projects $g \in [r]$, consider the scenario where voter $i$ sets $\{(f^i_{\rho}, s^i_{\rho}) \}_{ \rho \neq g, \rho \in [r]}$ for the other groups and is deciding their vote for group $g$, with default initialization of $(\hat f^i_g = 0, \hat s_g^i = \emptyset)$. 
In the proof, we study how voter $i$'s true utility changes relative to the outcome in this state of their vote, on setting their vote in group $g$.

We show that for any report $\{(f^i_{\rho}, s^i_{\rho}) \}_{ \rho \neq g, \rho \in [r]}$, voter $i$'s truthful vote $(\hat f_g^i,\hat s_g^i)$ on group $g$ is a weakly dominant strategy. This is sufficient to prove the theorem.

Consider the following exhaustive cases on the group and voter $i$'s true vote.


    \textbf{1. Voter $i$ derives no utility from group $g$, i.e., $\hat f_g^i = 0$:}
    
    Setting $ f_g^i > 0$ and $ s_g^i \neq \emptyset$ can only increase $\mathcal{U}_{\{g\}}(b)$ for any $b$. This can only reduce the funding from the other groups per Lemma~\ref{lem:greedy}. Per Lemma~\ref{lem:greedy}, this cannot cause the addition of any new project in the outcome $Q$ from outside of group $g$ and hence cannot increase voter $i$'s true utility. Therefore, in this case, truthful voting in group $g$ is a weakly dominant strategy.

    \textbf{2a. Group $g$ is of contradictory projects.}
    
    Recall that a voter can approve at most one project in this group. Let $p$ be the project in $\hat s_g^i.$ Since the utilities of projects in contradictory groups are additively separable, each project from group $g$ can be considered a separate singleton group for the purpose of preference aggregation.
   
    If voter $i$ votes truthfully in group $g$, they maximize the observed social welfare of project $p$. This may get project $p$ included in $Q$ if not already present. Per Lemma~\ref{lem:greedy}, exactly one project is eliminated from $Q$ if $p$ gets included. Voter $i$'s true utility can not decrease.

    There are two ways for voter $i$ to lie in contradictory groups: by setting $ f_g^i = 0$ or $( f_g^i = 1, s_g^i = \{p^\prime\})$ where $p^\prime \neq p.$  In the former, their true utility doesn't change. In the latter, they can only lose 1 unit of true utility if $p^\prime$ gets into $Q$ by replacing a project from which $i$ gets true utility. Overall, it is impossible to increase true utility by deviating from the truthful vote in groups of contradictory projects.
    %
   
    \textbf{2b. Group $g$ is a group of independent projects per Definition~\ref{def:vote_profiles}.}
    
    \underline{\textit{Case I:}} Voter $i$ sets $f_g^i = | s_g^i|.$ (see that $ f_g^i > |s_g^i|$ is identical to $ f_g^i = | s_g^i|$ for the utility function, and therefore we need not consider it separately). In this case, the observed social welfare of the projects in group $g$ are additively separable.  
    Adding a project $p$ to $s_g^i$  can cause its inclusion to $Q$ while removing exactly one project from $Q$ as per Lemma~\ref{lem:greedy}. If project $p$ is in the true vote $\hat s_g^i$ then this can only increase $i$'s true utility, and if project $p \notin \hat s_g^i$, then this can only decrease $i$'s true utility by displacing a  project beneficial for them. Therefore, setting $s_g^i = \hat s_g^i$ is a weakly dominant strategy.

    \underline{\textit{Case II:}} $f_g^i < |s_g^i|.$ For the purpose of the Greedy Algorithm~\ref{alg:greedy-gf}, this vote is equivalent to truncating $s_g^i$ to the top $f_g^i$ projects per their social utilities in $V_{-i}$ and the tie-breaking order. 
    We now have the same problem as in \textit{Case I}. Adding all of $\hat s_g^i$, and no projects from outside of it, in the ``truncated'' $s_g^i$  is, therefore, a weakly dominant strategy. This corresponds precisely to truthful voting.
        
   \textbf{2c. Group $g$ is a group of substitute projects per Definition~\ref{def:vote_profiles}.}
   
   Recall that we assume a strict total order on subgroups in groups of substitute projects per Definition~\ref{def:vote_profiles}. We use the following notation.  There is a partition $\Lambda_g = \{\lambda_{g,k}\}_{k \in [q_g]}$ of $z_g$ such that there is a strict total order within each subgroup $\lambda_{g,k}.$ Every voter in $V_{-i}$ approves projects from at most one of $\lambda_{g,k}$ per Definition~\ref{def:vote_profiles}.  Let $\lambda_{g,*}$ be the subgroup from which voter $i$ derives true utility.

    For any $f_g^i > 0$ and $s_g^i \subseteq \lambda_{g,*}$, untruthfully expanding $s_g^i$ by approving projects from $z_g\setminus\lambda_{g,*}$ can only include additional projects from $z_g\setminus\lambda_{g,*}$ to the outcome $Q$.
    This follows from Lemma~\ref{lem:greedy}. 
    This will remove an equal number of projects already in $Q$, and $i$'s true utility cannot increase. Therefore, not approving any project from $z_g\setminus\lambda_{g,*}$ is a weakly dominant strategy.

    It now remains to decide $ f_g^i$ and $s_g^i$ such that $s_g^i \subseteq \lambda_{g,*}.$

Let, in the true vote, $\hat s_g^i$ have projects from rank $1$ to $\psi$ in $\lambda_{g,*}$ (the `rank' is per the strict total order). 
Setting $\hat f_g^i \geq \psi$ is equivalent to $\hat f_g^i = \psi$ for voter $i$'s true utility, so we consider only $\hat f_g^i \leq \psi.$ 

Consider the case where voter $i$ votes truthfully in group $g.$ Let $b^*$ funding be allotted to $\lambda_{g,*}$ -- in this case, the top $b^*$ rank projects from $\lambda_{g,*}$ (up to tie-breaking) get funded. This is because, in any outcome, if we replace a lower-ranked project $p_l$ with a higher-ranked project $p_h$, the social welfare cannot decrease since, per Definition~\ref{def:vote_profiles}, all voters who get utility from $p_l$ also get the same utility from $p_h$.
In this outcome, voter $i$'s utility from group $g$ is $\min(f^i_g, b^*).$ We now show that any deviation cannot improve the utility.

We consider two cases of $b^*$ ($\geq$ or $< \hat f^i_g$) and show no profitable deviation exists in either. For a deviation $(f^i_g, s^i_g),$ let $b$ funds be allocated to $\lambda_{g,*}.$

 \underline{\textit{Case I: $b^* \geq \hat f^i_g$.}} The true utility of voter $i$ from the group is capped at $\hat f^i_g.$ 
 If $b \geq b^*,$ then the true utility from other groups can only decrease on making a deviation (per Lemma~\ref{lem:greedy}), but that from group $g$ cannot increase.
Whereas, if $b < b^*,$  the true utility of voter $i$ from group $g$ must decrease by at least $b^*-b.$ 
This is because only those projects can be removed from the outcome $Q$ whose approval is removed by voter $i$ in the deviation. 
The projects which were in $Q$ without $i$'s contribution to their marginal social utility, remain in $Q$ despite any deviation in voter $i$'s vote. Voter $i$'s true utility from outside the group can increase by at most $b^*-b$. Overall, therefore, $i$'s true utility cannot increase.

\underline{\textit{Case II: $b^* < \hat f^i_g$.}} If $b \leq b^*,$  the true utility of voter $i$ from group $g$ decreases in this deviation by at least $b^*-b.$ This is because $b$ projects have at most $b$ utility. The utility from outside the group can increase by at most $b^*-b$ per Lemma~\ref{lem:greedy}.
Also, even for $b>b^*,$ voter $i$ cannot increase their true utility from the group because their utility-producing projects from the groups have the maximum possible reported marginal social utility in the true vote, and per Lemma~\ref{lem:greedy}, they cannot be included to $Q$ at a lower marginal social utility. 
%
%
\end{proof}

\subsection{Proof of Theorem~\ref{propostion:strategyproof}} \label{sec:knapsack_lam}
\begin{proof}
Recall that in Theorem~\ref{thm:nash_equilibrium}, we showed that truthful voting is a weak Nash equilibrium for vote profiles per Definition~\ref{def:vote_profiles}.  In Theorem~\ref{propostion:strategyproof}, we consider only singleton groups, there are no project interactions, and all vote profiles are per  Definition~\ref{def:vote_profiles}. Per Theorem~\ref{thm:nash_equilibrium}, truth-telling is a weakly dominant strategy for a voter $i$ when the other voters vote truthfully. Suppose some other voters $i' \in I$ did not vote truthfully. Since all votes on singleton groups are per Definition~\ref{def:vote_profiles}, the untruthful votes can be seen as the truthful votes of another set of voters. Now it is a weakly dominant strategy for voter $i$ to vote truthfully per Theorem~\ref{thm:nash_equilibrium}. This completes the proof.
\end{proof}

\subsection{Proof of Theorem~\ref{obs:non-laminar}} \label{sec:nonlam}

\begin{proof}
Consider a non-laminar labelling $L$. There must exist two labels $l_1$ and $l_2$ such that $P_{l_1}$ is neither a subset nor a superset of $P_{l_2}$. Define $P^{only}_{l_1} = P_{l_1} \setminus P_{l_2}$ and  $P^{only}_{l_2} = P_{l_2} \setminus P_{l_1}$ and $P^{any} = P_{l_1} \cap P_{l_2}$. The cardinality of each of these sets is at least 1. Set the minimum and maximum funding constraints for labels $l_1$ and $l_2$ to 1 and set the minimum and maximum constraints for every other label to 0 and $\infty$, respectively. Let $B =2.$ We assume that the projects in $P^{any}$ have a higher preference in the tie-breaking order followed by projects in $P^{only}_{l_1}$ and projects in $P^{only}_{l_2}$.

Observe that there can be only two possible types of outcomes satisfying the funding constraints.

\begin{itemize}
    \item Only one project is funded from the set $P^{any}$.
    \item One project each is funded from the sets $P^{only}_{l_1}$ and $P^{only}_{l_2}$.
\end{itemize}

Consider projects $p_1 \in P^{only}_{l_1}$, $p_2 \in P^{only}_{l_2},$ and $p_3 \in P^{any}.$  Consider the following vote profile. Voter 1 approves projects $p_1$ and $p_3$, voter 2 approves only project $p_3$, and voter 3 approves only project $p_1$. In this case, only project $p_3$ is the outcome of FC-SWM-PB.

In this case, voter 3 has the incentive to approve both projects $p_1$ and $p_2$ in their vote. Now $\{p_1,p_2\}$ is the outcome of FC-SWM-PB, which increases the utility of voter 3 by 1 unit. 
\end{proof}



\end{document}